\newtheorem{theorem}{Theorem}[section]
\newtheorem{corollary}[theorem]{Corollary}
\newtheorem{remark}[theorem]{Remark}
\newtheorem{example}[theorem]{Example}
\newtheorem{definition}[theorem]{Definition}
\newcommand{\cA}{{\mathcal A}}
\newcommand{\R}{{\mathord{\mathbb R}}}
\newcommand{\C}{{\mathord{\mathbb C}}}
\newcommand{\Z}{{\mathord{\mathbb Z}}}
\newcommand{\N}{{\mathord{\mathbb N}}}
\newcommand{\E}{{\mathord{\mathbb E}}}
\newcommand{\PP}{{\mathord{\mathbb P}}}
\newcommand{\be}{\begin{equation}}
\newcommand{\ee}{\end{equation}}
\newcommand{\bea}{\begin{eqnarray}}
\newcommand{\eea}{\end{eqnarray}}
\def\idty{{\mathchoice {\mathrm{1\mskip-4mu l}} {\mathrm{1\mskip-4mu l}} %
{\mathrm{1\mskip-4.5mu l}} {\mathrm{1\mskip-5mu l}}}}
\numberwithin{equation}{section}
\begin{document}

\title[Disordered Quantum Spin Systems]{Dynamical Localization in\\ Disordered Quantum Spin Systems}

\author[E. Hamza]{Eman Hamza$^1$}
\address{$^1$ Department of Physics\\
Faculty of Science, Cairo University, Cairo 12613, Egypt}
\email{eman.hamza07@gmail.com}

\author[R. Sims]{Robert Sims$^2$}
\thanks{R.\ S.\ was supported in part by NSF grants DMS-0757424 and DMS-1101345}
\address{$^2$ Department of Mathematics\\
University of Arizona\\
Tucson, AZ 85721, USA}
\email{rsims@math.arizona.edu}

\author[G. Stolz]{G\"unter Stolz$^3$}
\thanks{G.\ S.\ was supported in part by NSF grants DMS-0653374 and DMS-1069320.}
\address{$^3$ Department of Mathematics\\
University of Alabama at Birmingham\\
Birmingham, AL 35294 USA}
\email{stolz@math.uab.edu}

\date{}
\maketitle

\vspace{.3truein} \centerline{\bf Abstract}

\medskip
{We say that a quantum spin system is dynamically localized if the time-evolution of 
local observables satisfies a zero-velocity Lieb-Robinson bound.
 In terms of this definition we have the following main results:
 First, for general systems with short range interactions, dynamical localization implies 
exponential decay of ground state correlations, up to an explicit correction. Second, the dynamical localization of random xy spin chains can be reduced to dynamical localization of an effective one-particle Hamiltonian. In particular, the isotropic xy chain in random exterior magnetic field is dynamically localized. }
%
%
%

\section{Introduction}

 One important reason for the popularity of quantum spin systems in theoretical and mathematical physics is that they can serve as relatively simple models to study interacting quantum many-body systems. In this setting, the physics of a single spin is governed by a finite-dimensional Hilbert space and hence (essentially) trivial. Therefore quantum spin systems are ideal models to focus directly and exclusively on physical phenomena arising from many-particle scattering. For the same reason one may want to study {\it random} quantum spin systems to better understand the effect of disorder on interacting quantum many-particle systems. This is our goal here.

Naturally, there is a considerable amount of physics literature on such models and we will not attempt to survey these works here (but we give some references to papers closely related to our work in Section~\ref{sec:isoxydynloc} below). Much less has been done on the rigorous mathematical level, where there was a short outburst of activity in the early 1990s, including the papers \cite{KleinPerez}, \cite{CKP}, \cite{KleinPerez92} and \cite{AKN}. This came at the end of a decade which had seen rapid progress in the theory of the Anderson model and, in particular, the development of several methods which allowed to give rigorous localization proofs. The referenced works from the early 90s built on the results and strategies which arose from studying the Anderson model and found ways to apply some of these ideas to quantum spin systems with disorder.

Our work here arose from a desire to return to these investigations from the point of view of progress made and new questions having been asked over the last two decades. One important development in this period was the increased attention to questions of quantum {\it dynamics} and, in particular, the development of new tools and refinement of existing tools to prove dynamical localization properties for the Anderson model and other types of random Schr\"odinger operators. A concrete goal of the present work is to discuss the meaning of dynamical localization in the context of quantum spin systems and to rigorously prove this for at least one class of models. Let us say from the outset that our model of choice, the xy spin chain in a random exterior magnetic field, is particularly simple as its localization properties can be directly reduced to those of the Anderson model. A future goal should be to understand dynamical localization properties for other models of disordered quantum spin systems, including multi-dimensional systems.

Another reason for returning to the study of disordered quantum spin systems is recent progress in rigorously understanding localization for the Anderson model with electron-electron interactions. The papers \cite{CS1,CS2} and \cite{AW} showed how to extend the methods of multiscale analysis and fractional moments, respectively, to obtain localization results in suitable regimes for a system of $N$ interacting electrons in a random potential. These methods do not yet allow to study the thermodynamic limit of an electron gas in a random environment, i.e.\ an infinite volume limit in which the number of electrons is proportional to the volume. In the physics literature, e.g.\ \cite{Baskoetal}, it has been suggested that localization will persist in the case of low electron density. For a chance to investigate this rigorously it is necessary to grasp the consequences of many-particle scattering effects in such systems. Characterizing localization in terms of the properties of the dynamics, rather than the spectrum, seems to be better suited to these many-body systems. Interacting Anderson models, due to the infinite-dimensional single particle dynamics, are much harder to analyze than random quantum spin systems. Nevertheless, by starting with the latter one can hope to get insights on how many-body localization can arise.

The dynamics of short range spin systems satisfy a locality estimate commonly known as {\it Lieb-Robinson bounds} \cite{LR1}. 
These estimates can be understood as upper bounds on the group velocity of spin waves or, alternatively, as bounds on information propagation in a spin system
 (a term inspired by quantum information theory, where quantum spin systems are thought of as models for interacting qubits). As reviewed in Section~\ref{Sec:cordec} below, Lieb-Robinson bounds can be proven for very general classes of quantum spin systems and have recently found much renewed interest and many applications.

We will define a strong form of dynamical localization for a quantum spin system in terms of the validity of a {\it zero-velocity Lieb-Robinson bound}, e.g.\ Definition~\ref{def:0lrv}. Given this, our two main results can be summarized as follows:

(i) If a zero-velocity Lieb-Robinson bound holds in a given quantum spin system, then --- up to a logarithmic correction --- the system satisfies exponential clustering, i.e.\ ground state correlations decay exponentially, independent of the system's size. This is the content of Theorem~\ref{thm:expdec}, stated and proven in Section~\ref{subsec:cordecay} below. Exponential clustering is known for quantum spin systems with a size-independent ground state energy gap. Thus a zero-velocity Lieb-Robinson bound can substitute for a ground state gap in the proof of exponential clustering.

(ii) We will prove that the isotropic xy spin chain in an exterior random magnetic field satisfies a zero-velocity Lieb-Robinson bound after averaging over the disorder. This will be stated as Corollary~\ref{cor:0vLRisoxy}. More generally, we review in Section~\ref{subsec:diag} the well known reduction (e.g.\ \cite{LSM}) of the {\it anisotropic} xy chain to an effective single particle Hamiltonian. We then show in Theorem~\ref{thm:main} that dynamical localization of the single particle Hamiltonian implies a zero-velocity Lieb-Robinson bound for the anisotropic xy chain. In the isotropic case the single particle Hamiltonian is given by the Anderson model, where the corresponding dynamical localization property is well known.

In Section~\ref{subsec:isocordecay} we combine these two results to conclude exponential clustering for the isotropic random xy chain, see Theorem~\ref{thm:isoxycordecay}. This requires some modification of Theorem~\ref{thm:expdec} and its proof, due to the fact that Theorem~\ref{thm:isoxycordecay} is not deterministic but refers to expected values of the relevant quantities. This result is closely related to earlier work in \cite{KleinPerez} where a direct proof of correlation decay for the isotropic random xy chain is given. In contrast to our result, the proof in \cite{KleinPerez} does not proceed via dynamical localization. We will further discuss this, as well as other related work in e.g. \cite{KMM}, within the concluding remarks of Section~\ref{sec:conrem}.

Let us finish this introduction by stressing what we consider to be the most important novel contributions of our work: 
(i) We propose the new concept of a {\it zero-velocity Lieb-Robinson bound} as a means to describe dynamical localization of many-body quantum systems. 
Our definition, which we can verify holds for certain models, is a simplified version of the {\it mobility gap} proposed by Hastings in \cite{Hastings}, see 
Section~\ref{subsec:cordecay} for more discussion. In particular, we provide a general argument which shows that this form of dynamical 
localization implies exponential decay of ground state correlations. (ii) As a first example of a many-body system where a 
zero-velocity Lieb-Robinson bound can be rigorously established we study disordered xy chains. The core argument is to 
reduce the analysis, via a Jordan-Wigner transform, to dynamical localization for Anderson-type single particle Hamiltonians. 
One obstacle in proving our estimates, which we stress are uniform in time, is to overcome the non-locality of the Jordan-Wigner transform, and
this is accomplished in Section~\ref{subsec:dl0v}. This argument goes beyond previously known non-rigorous results in the physics literature (see Section~\ref{sec:isoxydynloc} for a discussion of the results in \cite{BO}).

\section{Lieb-Robinson Bounds and Correlation Decay} \label{Sec:cordec}

%
%
%

The goal of this section is to prove an estimate on the decay of correlations
in systems that satisfy a strong form of the Lieb-Robinson bound, see
Definition~\ref{def:0lrv} below. Although we state and prove this bound for deterministic
systems, the most interesting models for which these bounds have
been established are random, more on this in Sections~\ref{sec:xychain} and \ref{sec:isoxychain}.
We begin this section by introducing a general class of quantum spin systems
to which our result applies. Next, we briefly discuss Lieb-Robinson bounds
and a result on exponential clustering for gapped systems. Finally, we define what we mean by a zero-velocity Lieb-Robinson bound and then state and
prove our main result on correlation decay in systems which are dynamically localized in this sense, Theorem~\ref{thm:expdec} below.

\subsection{Quantum Spin Systems} \label{Sec:qss}

Our results on correlation decay apply to a very general class of multi-dimensional quantum spin systems. In fact, while we choose the set of spin sites to be $\Z^{\nu}$, with some more notational effort we could equally well consider systems over much more general graphs. We feel, however, that this would not be beneficial to our presentation.

A quantum spin system over $\mathbb{Z}^{\nu}$ is defined as follows.
To each $m \in \mathbb{Z}^{\nu}$, associate a finite dimensional Hilbert Space
$\mathcal{H}_m = \mathbb{C}^{n_m}$. $\mathcal{H}_m$ is commonly referred to as
the single-site Hilbert space, and the dimension $n_m \geq 2$ is related
to the spin $J_m$ at site $m$ by $n_m = 2 J_m +1$, e.g., $n_m =2$ corresponds to spin
$J_m = 1/2$. For any finite set $N \subset \mathbb{Z}^{\nu}$, a composite
Hilbert space and algebra of observables are defined by setting
\[
\mathcal{H}_{N} = \bigotimes_{m \in N} \mathcal{H}_m \quad \mbox{and} \quad \mathcal{A}_{N} = \bigotimes_{m \in N} \mathcal{B}( \mathcal{H}_m) \, ,
\]
where $\mathcal{B}( \mathcal{H}_m)$ denotes the bounded linear operators over $\mathcal{H}_m$, i.e.,
the set of $n_m \times n_m$ complex matrices. Due to the tensor product structure, it is
clear that for any finite sets $N_0 \subset N \subset \mathbb{Z}^{\nu}$,
each observable $A \in \mathcal{A}_{N_0}$ can be identified with an
observable $A' = A \otimes \idty_{N \setminus N_0} \in \mathcal{A}_{N}$.
In this case, we regard $\mathcal{A}_{N_0} \subset \mathcal{A}_{N}$ and thereby
define inductively the algebra of all local observables
\[
\mathcal{A}_{\rm loc} = \bigcup_{N \in \mathcal{P}_0( \mathbb{Z}^{\nu})} \mathcal{A}_{N}
\]
where the union is taken over all finite subsets of $\mathbb{Z}^{\nu}$.

A model on such a quantum spin system is defined through an interaction.
An interaction is a mapping $\Phi : \mathcal{P}_0( \mathbb{Z}^{\nu}) \to \mathcal{A}_{\rm loc}$
satisfying $\Phi(M)^* = \Phi(M) \in \mathcal{A}_M$ for all finite $M \subset \mathbb{Z}^{\nu}$.
Corresponding to any interaction, there is an associated family of
local Hamiltonians, parametrized by finite subsets $N \subset \mathbb{Z}^{\nu}$, given
by
\begin{equation} \label{eq:locham}
H_{N}^{\Phi} = \sum_{M \subset N} \Phi (M) \, .
\end{equation}
When the interaction is fixed, we will often drop the dependence of the
local Hamiltonian on $\Phi$. Since the sum in (\ref{eq:locham}) is finite, $H_{N}$ is
self-adjoint. By the spectral theorem, there is a one-parameter group of
automorphisms, which we denote by $\{ \tau_t^{N} \}_{t \in \mathbb{R}}$, defined
by setting
\[
\tau_t^{N} (A) = e^{itH_{N}} A e^{-itH_{N}} \qquad \text{for all } A \in \cA_{N} \,.
\]
$\tau_t^{N}$ is called the Heisenberg dynamics or time evolution corresponding to
$H_{N}$.

The above general framework covers a wealth of heavily studied explicit models, such as the Heisenberg, Ising, XY and XXZ models, all describing spin 1/2 particles. An important mathematical model for higher spin is the spin 1 AKLT model \cite{AKLT}. For general background on the mathematical treatment of quantum spin systems as models of statistical mechanics we refer to \cite{BratRob} and, for a different perspective, \cite{Simon}.

As an example we provide the spin 1/2 Heisenberg model, probably the oldest and most prominent example.

\begin{example} A common model is the Heisenberg Hamiltonian.
In this case, one takes as single-site Hilbert space $\mathcal{H}_m = \mathbb{C}^2$
for all $m \in \mathbb{Z}^{\nu}$. The interaction for this model is given by
\[
\Phi(M) = \left\{ \begin{array}{cc} \mu \left( \sigma_n^x \sigma_m^x+ \sigma_n^y \sigma_m^y + \sigma_n^z \sigma_m^z \right) & \mbox{if } M = \{ n, m\} \mbox{ and } |n-m|=1, \\
0 & \mbox{otherwise},  \end{array} \right.
\]
where the real number $\mu$ is the parameter of the model,
\begin{equation}\label{eq:pauli}
\sigma^x = \left( \begin{array}{cc} 0 & 1 \\ 1 & 0  \end{array} \right) , \quad \sigma^y = \left( \begin{array}{cc} 0 & -i \\ i & 0  \end{array} \right) ,  \quad \mbox{and} \quad \sigma^z = \left( \begin{array}{cc} 1 & 0 \\ 0 & -1  \end{array} \right) \,
\end{equation}
are the Pauli-spin matrices, and for any $w \in \{x, y, z\}$, $\sigma_n^w \in \mathcal{A}_M$ is the matrix $\sigma^w \otimes \idty$ with $\sigma^w$ in the $n$-th factor.
\end{example}

\subsection{Some Prior Results}
In 1972, Lieb and Robinson proved a locality estimate for a large class of quantum spin models \cite{LR1}.
Their observation can be described as follows. Consider a quantum spin system over
$\mathbb{Z}^{\nu}$, and let $J$ and $K$ be finite, disjoint subsets of
$\mathbb{Z}^{\nu}$. Take any finite subset $N$ with $J \cup K \subset N$.
It is clear from the tensor product structure of the observable algebras that for any
$A \in \mathcal{A}_J$ and $B \in \mathcal{A}_K$, $[A \otimes \idty_{N \setminus J}, B \otimes \idty_{N \setminus K}]=0$.
To ease notation, we will often suppress these identities and simply regard $A$ and $B$ as observables
in $\mathcal{A}_{N}$ whenever $J,K \subset N$. For a large class of essentially short-range interactions, Lieb and
Robinson proved a bound of the form
\begin{equation} \label{eq:lrb}
\left\| \left[ \tau_t^{N}(A), B \right] \right\| \leq C(J,K) \|A\| \|B\| e^{- \eta \left( d(J,K) - v|t| \right)} \, ,
\end{equation}
where $d(J,K) := \min\{|j-k|: j \in J, k \in K\}$.

This estimate shows that for times $t$ with $|t| \leq d(J,K) /v$ the commutator remains exponentially small, and
so disturbances do not propagate through the system, by the Heisenberg dynamics, arbitrarily fast.
It is important to note that the numbers $C(J,K)$, $\eta$, and $v$ are all independent of the volume $N$
on which the dynamics is defined. The number $v$, which can be made explicit in terms of an appropriate norm
on the interaction, is called a bound on the Lieb-Robinson velocity of the model under consideration.
Recently there have been a number of generalizations of these Lieb-Robinson bounds
\cite{nach12006, hast2006, nach22006, eisert2008, amour2009, nach2009, PS09, Pou10, nach2011}, and they
have proven useful in a variety of applications
\cite{hastings2004a, NS2, hastings2007, HS, nach2010, bravyi:2010a, bravyi:2010b, bachmann2011}.
Many of the previously mentioned results were reviewed in e.g. \cite{NS3, hastings2010, sims2010}.

One of the first applications of the new Lieb-Robinson bounds was a proof of the Exponential Clustering
theorem \cite{nach12006, hast2006}. Stated simply, exponential clustering describes the fact that ground-state expectations of
gapped quantum spin systems decay exponentially in space. As one might imagine, the rate of exponential
decay depends on the size of the gap, see \cite{nach12006} for an explicit estimate and \cite{NS3.5} for an improvement.
It is clear that local Hamiltonians corresponding to any interaction will be gapped. The important
point in the proof of clustering is that if the model under consideration has a uniform positive lower bound
on the gaps in the finite volume, then the rate of decay persists in the thermodynamic limit.

Despite the fact that ground state
expectations are time-independent, the proof of this clustering result uses Lieb-Robinson bounds in a crucial
way. In fact, the proof requires a slightly stronger bound than the one claimed in (\ref{eq:lrb}):
\[
\left\| \left[ \tau_t^{N}(A), B \right] \right\| \leq |t| C(J,K) \|A\| \|B\| e^{- \eta (d(J,K)-v|t|)} \, ,
\]
at least for small times, e.g., $|t| \leq 1$.

\subsection{Correlation Decay in Dynamically Localized Systems} \label{subsec:cordecay}

Our first result in this work, see Theorem~\ref{thm:expdec} below, is a proof of a clustering-type
bound under different assumptions than we described above. In words, we prove that a {\it zero-velocity Lieb-Robinson bound}
implies exponential decay of correlations, up to a logarithmic correction. Interestingly, our method of
proof, motivated by some recent observations by Hastings in \cite{Hastings}, provides a decay rate independent
of the size of the gap, and our bound depends on the gap only through the logarithmic correction.

Before proceeding with this result, let us relate our work to the contents of \cite{Hastings}. There the concept of a {\it mobility gap} is proposed, which is considered a many-body version of dynamical localization (and inspired by the {\it mobility edge} concept for single particle systems). It is discussed that systems with a mobility gap have properties similar to gapped systems, including decay of ground state correlations. They are also said to satisfy a higher-dimensional Lieb-Schultz-Mattis theorem and Hall conductance quantification. A zero-velocity Lieb-Robinson bound in our sense can be considered a special case, and particularly strong form, of a mobility gap. This allows for a simplified argument, although with methods similar to those of \cite{Hastings}, in our derivation of correlation decay below.

The goal of this section is to state and prove Theorem~\ref{thm:expdec} below.
We begin with the following definition, which provides a way to characterize dynamical localization of a quantum spin system.

\begin{definition} \label{def:0lrv} Let $\Phi$ be an interaction on a quantum spin system over $\mathbb{Z}^{\nu}$.
We say that $\Phi$ satisfies a zero-velocity Lieb-Robinson bound if there exists $\eta >0$ with the following property: Given any finite, disjoint subsets $J, K \subset \Z^{\nu}$, there exists $C(J,K) <\infty$ such that
\begin{equation} \label{eq:0lrv}
\left\| \left[ \tau_t^{N}(A), B \right] \right\| \leq C(J,K)  \min[|t|, 1]  \| A \| \| B \| e^{- \eta d(J,K)}
\end{equation}
for all finite $N\subset \Z^{\nu}$ with $J \cup K \subset N$, all $t\in \R$ and all $A\in \mathcal{A}_J$ and $B\in \mathcal{A}_K$.
\end{definition}

Here it is important that the numbers $\eta$ and $C(J,K)$ do not depend on the size of $N$ on which the
dynamics is defined. In applications one will usually be able to say more about the dependence of $C(J,K)$ on $J$ and $K$, see, e.g., Section~\ref{sec:xychain}. For bounds of the form (\ref{eq:0lrv}) to be useful, $C(J,K)$ should only depend on simple geometric properties of $J$ and $K$, such as their size or surface area.

Also note here that $\|[\tau_t^N(A),B]\| = \|[A, \tau_{-t}^N(B)]\|$, so that bounds such as (\ref{eq:0lrv}) can be applied equally well to $\|[A, \tau_t^N(B)]\|$.

\begin{theorem} \label{thm:expdec} Let $\Phi$ be an interaction on $\mathbb{Z}^{\nu}$ that satisfies a zero-velocity
Lieb-Robinson bound. Let $J$ and $K$ be finite, disjoint subsets of $\mathbb{Z}^{\nu}$ and
take a finite set $N \subset \mathbb{Z}^{\nu}$ with $J \cup K \subset N$.
Suppose the local Hamiltonian, $H_{N}$, corresponding to $\Phi$ satisfies $\min \sigma(H_N)=0$.
Let $\psi_0 = \psi_0( N)$ denote a normalized ground state, i.e., $H_{N} \psi_0 = 0$ and
$\| \psi_0 \| =1$ and let   $P_0$ is the orthogonal projection onto ${\rm ker}(H_{N})$. Then for all $A \in \mathcal{A}_J$ and $B \in \mathcal{A}_K$
with $P_0 B \psi_0 = P_0 B^* \psi_0 = 0$, the bound
\begin{equation} \label{eq:expdec}
\left| \langle \psi_0, A B \psi_0 \rangle \right| \leq \left[ 1 + \frac{C(J,K)}{\pi} \left( 2 - \ln \frac{\gamma}{\sqrt{\pi \eta d(J,K)}} \right)\right] \|A\| \|B\| e^{- \eta d(J,K) } \, ,
\end{equation}
holds. Here $\eta$ and $C(J,K)$ are as in Definition~\ref{def:0lrv},
$\gamma = \gamma_{N}$ is the spectral gap, i.e.
\[
\gamma = \min (\sigma(H_N) \setminus \{0\}) \, .
\]
\end{theorem}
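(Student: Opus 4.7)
The plan is to represent $\langle \psi_0, AB\psi_0\rangle$ as a time integral against a smooth spectral filter $K$ and then invoke the zero-velocity Lieb--Robinson bound on the commutator that arises, in the spirit of the mobility-gap calculations of \cite{Hastings}. Concretely, I would take $K(t) = e^{-\alpha t^2}/(i\pi t)$ for a parameter $\alpha>0$ to be optimised; differentiating under the integral sign shows that its Fourier transform is $\widehat K(\omega)=\mathrm{erf}(\omega/(2\sqrt\alpha))$, which vanishes at the origin and differs from $+1$ by at most $\mathrm{erfc}(\gamma/(2\sqrt\alpha))$ on $[\gamma,\infty)$. This is a Gaussian-smoothed approximation to the positive-frequency step, with a tail $\mathrm{erfc}(x)\le e^{-x^2}$ that we will use to force the error below $e^{-\eta d(J,K)}$.

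First I would use the spectral decomposition of $H_N$, together with the hypothesis $P_0 B\psi_0=0$ and the identity $e^{itH_N}\psi_0=\psi_0$, to derive
\[
\langle \psi_0, AB\psi_0\rangle = \int_{\R} K(t)\,\langle \psi_0, A\,\tau_t^N(B)\,\psi_0\rangle\,dt + \varepsilon_1,\qquad |\varepsilon_1|\le \mathrm{erfc}\bigl(\gamma/(2\sqrt\alpha)\bigr)\|A\|\|B\|,
\]
where $\varepsilon_1$ collects the error from $\widehat K\neq 1$ on the positive spectrum. Splitting $A\,\tau_t^N(B)=[A,\tau_t^N(B)]+\tau_t^N(B)A$ and evaluating the integral of the second piece by the same spectral procedure, this time using $P_0 B^*\psi_0=0$, produces
\[
\int_{\R} K(t)\,\langle \psi_0, \tau_t^N(B)\,A\,\psi_0\rangle\,dt = -\langle \psi_0, BA\psi_0\rangle + \varepsilon_2,
\]
with an analogous bound on $|\varepsilon_2|$. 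The decisive structural observation is that $J\cap K=\emptyset$ forces $[A,B]=0$, so $\langle \psi_0, BA\psi_0\rangle=\langle \psi_0, AB\psi_0\rangle$; moving this term to the left and halving yields
\[
\langle \psi_0, AB\psi_0\rangle = \tfrac{1}{2}\int_{\R} K(t)\,\langle \psi_0, [A,\tau_t^N(B)]\,\psi_0\rangle\,dt + \tfrac{1}{2}(\varepsilon_1+\varepsilon_2).
\]

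At this stage Definition~\ref{def:0lrv} controls the commutator inside the integral by $C(J,K)\min(|t|,1)\|A\|\|B\|e^{-\eta d(J,K)}$, so the argument reduces to estimating the weighted $L^1$-norm of $K$:
\[
\int_{\R}|K(t)|\min(|t|,1)\,dt = \frac{2}{\pi}\left[\int_0^1 e^{-\alpha t^2}\,dt + \int_1^\infty \frac{e^{-\alpha t^2}}{t}\,dt\right].
\]
The first piece is at most $1$, and the substitution $u=\alpha t^2$ in the second gives $\tfrac12\int_\alpha^\infty e^{-u}/u\,du$, which grows like $\tfrac12|\ln\alpha|$ as $\alpha\to 0$. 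Finally, I would tune the single parameter $\alpha$ so that the Gaussian error $\mathrm{erfc}(\gamma/(2\sqrt\alpha))$ is comparable to $e^{-\eta d(J,K)}$, which forces the choice $\alpha\sim \gamma^2/\eta d(J,K)$; a careful calibration of numerical constants then reproduces the logarithmic factor $\ln(\sqrt{\pi\eta d(J,K)}/\gamma)$ and the additive $\|A\|\|B\|e^{-\eta d(J,K)}$ term as written in \eqref{eq:expdec}.

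The hard part will be striking the correct balance in the choice of filter: a sharper cutoff in frequency reduces $\varepsilon_1,\varepsilon_2$ but enlarges the $L^1$ cost of $K$ against $\min(|t|,1)$, while heavier time damping reduces that cost but enlarges the spectral error at the gap. The kernel $e^{-\alpha t^2}/(i\pi t)$ sits essentially at the balance point, so the remaining work is the routine but delicate asymptotic book-keeping based on $\mathrm{erfc}(x)\sim e^{-x^2}/(x\sqrt\pi)$ and the leading expansion of the exponential integral.
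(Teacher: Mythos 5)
Your proposal is correct and is essentially the paper's own argument in slightly different clothing: the kernel $K(t)=e^{-\alpha t^2}/(i\pi t)$ is exactly the $\epsilon\to 0$ limit of the paper's quasi-local approximant $B(\alpha,\epsilon)=\frac{1}{2\pi i}\int\tau_t^N(B)\frac{e^{-\alpha t^2}}{t-i\epsilon}\,dt$, and your identity $2\langle\psi_0,AB\psi_0\rangle=\int K(t)\langle\psi_0,[A,\tau_t^N(B)]\psi_0\rangle\,dt+\varepsilon_1+\varepsilon_2$, obtained by using $[A,B]=0$ to solve for $\langle\psi_0,AB\psi_0\rangle$, reproduces the paper's three-term decomposition $\langle\psi_0,AB\psi_0\rangle=\langle\psi_0,A(B-B(\alpha,\epsilon))\psi_0\rangle+\langle\psi_0,B(\alpha,\epsilon)A\psi_0\rangle+\langle\psi_0,[A,B(\alpha,\epsilon)]\psi_0\rangle$ term by term. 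The remaining steps — bounding the spectral-filter errors by $\frac12\mathrm{erfc}(\gamma/2\sqrt{\alpha})\le\frac12 e^{-\gamma^2/4\alpha}$, applying the zero-velocity bound with the $\min(|t|,1)$ factor to the commutator integral, and tuning $\alpha=\gamma^2/(4\eta\,d(J,K))$ — are precisely those of the paper, so the "delicate book-keeping" you flag is routine and lands on the stated constant.
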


Before we begin the proof of this theorem, some comments are in order. First, if the
ground state of $H_{N}$ is non-degenerate, then the additional assumption on
the observable $B$ is equivalent to $\langle \psi_0, B \psi_0 \rangle = 0$. In this case,
our result then demonstrates exponential decay of correlations $\left| \langle \psi_0, A B \psi_0 \rangle - \langle \psi_0, A \psi_0 \rangle \langle \psi_0, B \psi_0 \rangle\right|$ in the distance of $J$ and $K$ (apply Theorem~\ref{thm:expdec} to $\tilde{B} := B - \langle \psi_0, B\psi_0 \rangle \idty$).

Also, the assumption that $H_N$ has ground state energy $0$ is inessential and made exclusively for convenience in proofs. For more general $H_N$ with ground state energy $E_0^{(N)}$ one gets the same result by first applying Theorem~\ref{thm:expdec} to $\hat{H}_N := H_N - E_0^{(N)}$, which has the same ground state and Heisenberg evolution as $H_N$.

Next, for general systems and especially large sets $N$, we expect that $\gamma$ is quite small. In this case the meaning of (\ref{eq:expdec}) is
\[ \left| \langle \psi_0, A B \psi_0 \rangle \right| \lesssim \|A\| \|B\| |\ln \gamma| e^{-(\eta-)d(J,K)}, \]
in the usual sense that $\eta-$ can be $\eta-\varepsilon$ for and $\varepsilon>0$ and $\lesssim$ allows for an $\varepsilon$-dependent (and $(J,K)$-dependent) constant factor. Clearly, the rate of decay is independent of the gap size; up to the logarithmic correction in $\gamma$. The usefulness of this estimate depends on the context of the application and, in particular, how rapidly $\gamma$ approaches zero for large volume. We will discuss this for the example of the isotropic random xy chain in Section~\ref{sec:isoxychain} below.

Finally, similarly to \cite{nach12006} and \cite{hast2006}, see also \cite{Hastings}, our method of proof employs a specific approximation
technique. Fix finite sets $K \subset N \subset \mathbb{Z}^{\nu}$. For any local observable $B \in \mathcal{A}_K$, $\alpha >0$, and $\epsilon>0$,
a quasi-local approximation of $B$ is defined by setting
\[
B( \alpha, \epsilon) = \frac{1}{2 \pi i} \int_{\mathbb{R}} \tau_t^{N}(B) \frac{e^{- \alpha t^2}}{t-i \epsilon} \, dt \, .
\]
If $K \subset N$, then for each positive choice of $\alpha$ and $\epsilon$, the support
of $B( \alpha, \epsilon)$ is $N$, however, the explicit gaussian kernel enables useful
estimates. In fact, the following is a basic fact about Fourier transforms of gaussians which we
will use in our proof. Let $E \in \mathbb{R}$, $\alpha >0$, and $\epsilon >0$. 
\begin{equation} \label{eq:bigbad}
 \frac{1}{2 \pi i}
\, \int_{\mathbb{R}} \, \frac{e^{iEt} \, e^{- \alpha t^2}}{t - i \epsilon} \, dt \, = \, \frac{ 1}{2 \sqrt{ \pi \alpha}} \,
\int_0^{ \infty} \, e^{- \epsilon w} \, e^{ - \frac{(w-E)^2}{4 \alpha}} \, dw.
\end{equation}
This follows, e.g., from the identity
\begin{equation}
(\epsilon + i t)^{-1} = \int_0^{\infty} e^{-w(\epsilon+it)} \, dw \, .
\end{equation}

\begin{proof}[Proof of Theorem~\ref{thm:expdec}]

We begin by observing that
\begin{equation}\label{eq:corr}
\langle \psi_0, AB \psi_0 \rangle = \left\langle \psi_0, A \left(B - B( \alpha, \epsilon) \right) \psi_0 \right\rangle + \langle \psi_0, B( \alpha, \epsilon) A \psi_0 \rangle + \langle \psi_0, [A, B(\alpha, \epsilon)]  \psi_0 \rangle \,
\end{equation}
for any choices of positive $\alpha$ and $\epsilon$. An application of the spectral theorem and (\ref{eq:bigbad}) shows that
for any $x \in \mathcal{H}_{N}$,
\begin{eqnarray*}
\left\langle x,  B( \alpha, \epsilon)  \psi_0 \right\rangle  & = &
 \int_{0}^{\infty}  \frac{1}{2 \pi i} \int_{\mathbb{R}} \frac{e^{itE} e^{-  \alpha t^2}}{t-i \epsilon} \, dt \, d \langle x,  P_E B\psi_0 \rangle \nonumber \\
& = & \int_{\gamma}^{\infty}  \frac{1}{2 \sqrt{\pi \alpha}} \int_0^{\infty} e^{- \epsilon w} e^{- \frac{(w-E)^2}{4 \alpha}} \, dw  \, d \langle x,  P_E B\psi_0 \rangle \, ,
\end{eqnarray*}
where, for the last equality, we used the assumption that $P_0B \psi_0 = 0$. The integrand can be
re-written as
\begin{eqnarray*}
\lefteqn{\frac{1}{2 \sqrt{\pi \alpha}} \int_0^{\infty} e^{- \epsilon w} e^{- \frac{(w-E)^2}{4 \alpha}} \, dw} \nonumber \\
& = & 1 - \frac{1}{2 \sqrt{\pi \alpha}}  \int_{- \infty}^0 e^{- \frac{(w-E)^2}{4 \alpha}} \, dw + \frac{1}{2 \sqrt{\pi \alpha}} \int_0^{\infty} \left( e^{- \epsilon w} -1 \right) e^{- \frac{(w-E)^2}{4 \alpha}} \, dw \, \nonumber \\
& =: & 1 - R_1(E, \alpha) + R_2(E, \alpha, \epsilon) \, .
\end{eqnarray*}
Since $\limsup_{\epsilon \to 0}R_2(E, \alpha, \epsilon) = 0$ and  $2 R_1(E, \alpha) \leq e^{- \frac{\gamma^2}{4 \alpha}}$ for $E \geq \gamma$, it is clear that
\begin{eqnarray*}
\limsup_{\epsilon \to 0} \left| \left\langle \psi_0, A \left(B - B( \alpha, \epsilon) \right) \psi_0 \right\rangle \right|
& \leq & \int_{\gamma}^{\infty} R_1(E, \alpha) d \langle A^*\psi_0, P_E B \psi_0 \rangle \nonumber \\
& \leq & \frac{1}{2} e^{- \frac{\gamma^2}{4 \alpha}} \| A^* \psi_0 \| \| B \psi_0 \| \, .
\end{eqnarray*}

Similarly,
\begin{eqnarray*}
\langle \psi_0, B( \alpha, \epsilon) A \psi_0  \rangle & = & \int_{0}^{\infty}  \frac{1}{2 \pi i} \int_{\mathbb{R}} \frac{e^{-itE} e^{-  \alpha t^2}}{t-i \epsilon} \, dt \, d \langle B^* \psi_0,  P_E A \psi_0 \rangle   \nonumber \\
& = & \int_{\gamma}^{\infty}  R_3(E, \alpha, \epsilon)  \, d \langle B^* \psi_0,  P_E A \psi_0 \rangle   \, ,
\end{eqnarray*}
where
\[
R_3(E, \alpha, \epsilon) = \frac{1}{2 \sqrt{\pi \alpha}} \int_0^{\infty} e^{- \epsilon w} e^{- \frac{(w+E)^2}{4 \alpha}} \, dw \leq \frac{1}{2} e^{- \frac{E^2}{4 \alpha}} \, .
\]
The bound
\[
\limsup_{\epsilon \to 0} \left| \langle \psi_0, B( \alpha, \epsilon) A \psi_0  \rangle \right| \leq \frac{1}{2} e^{- \frac{\gamma^2}{4 \alpha}} \| B^* \psi_0 \| \| A \psi_0 \| \, ,
\]
readily follows.

To the final term, we apply the assumption of a zero-velocity Lieb-Robinson bound.
Clearly,
\begin{equation} \label{eq:thirdterm}
\left| \langle \psi_0, [A, B(\alpha, \epsilon)]  \psi_0 \rangle \right| \leq \left\| [A, B(\alpha, \epsilon)] \right\| \leq \frac{1}{2 \pi} \int_{\mathbb{R}} \frac{ \left\| \left[ A, \tau_t^{N}(B) \right] \right\|}{|t|}  e^{- \alpha t^2} \, dt \, .
\end{equation}
We divide the integral above into three pieces. Using (\ref{eq:0lrv}), it is clear that
\[
\frac{1}{2 \pi} \int_{|t| \leq 1} \frac{ \left\| \left[ \tau_t^N(A), B \right] \right\|}{|t|}  e^{- \alpha t^2} \, dt  \leq  \frac{C(J,K)}{\pi} \|A\| \|B\| e^{- \eta d(J,K)}  \, ,
\]
and similarly, for any $\lambda >1$,
\[
\frac{1}{2 \pi} \int_{1\leq |t| \leq  \lambda} \frac{ \left\| \left[ \tau_t^N(A), B \right] \right\|}{|t|}  e^{- \alpha t^2} \, dt  \leq  \frac{C(J,K)}{\pi} \|A\| \|B\|
e^{- \eta d(J,K)} \ln( \lambda ) .
\]
Lastly,
\[
\frac{1}{2 \pi} \int_{ |t|> \lambda} \frac{ \left\| \left[ \tau_t^N(A), B \right] \right\|}{|t|}  e^{- \alpha t^2} \, dt  \leq  \frac{C(J,K)}{2\pi}\|A\| \|B\| e^{- \eta d(J,K)}
\frac{1}{\lambda} \sqrt{\frac{\pi}{\alpha}} .
\]

Now with the choice of
\begin{equation} \label{eq:alphalambda}
\alpha = \frac{\gamma^2}{4 \eta d(J,K)} \quad \mbox{and} \quad 2 \lambda = \sqrt{ \frac{\pi}{\alpha}} \, ,
\end{equation}
we have proven that
\[
\left| \langle \psi_0, AB \psi_0 \rangle \right| \leq \| A \| \| B \| e^{- \eta d(J,K)} \left( 1 + \frac{C(J,K)}{\pi} \left( 2 + \ln  \sqrt{ \frac{ \pi}{4 \alpha}}  \right) \right) \, ,
\]
as claimed.
\end{proof}

%
%

\section{The $xy$-chain with disorder} \label{sec:xychain}

In this section we consider the anisotropic xy-chain with free boundary conditions, as introduced
in \cite{LSM}. We start by reviewing the diagonalization of this xy Hamiltonian with
deterministic coefficients. In essence, this consists of using a Jordan-Wigner transform to reduce the $n$-body xy Hamiltonian to a free Fermion system whose diagonalization is governed by the diagonalization of an effective one-body Hamiltonian. The latter takes the form of a $2\times 2$-block Jacobi matrix.

Next, we consider the case where the coefficients in the xy chain are random
variables. In this case the corresponding block-Jacobi matrix is a random operator closely related to the Anderson model.
We demonstrate that if the block-Jacobi matrix is dynamically
localized, then the Heisenberg dynamics corresponding to the xy Hamiltonian
satisfies a zero-velocity Lieb-Robinson bound in average.

\subsection{Diagonalizing the xy Chain} \label{subsec:diag}
Consider three real-valued sequences $\{ \mu_j \}$, $\{ \gamma_j \}$, and $\{ \nu_j \}$.
These parameters will represent the coupling strength, the anisotropy, and the external magnetic field
respectively. We will assume $\mu_j \not= 0$ for all $j$ as otherwise the chain decomposes into shorter pieces. To each integer $n \geq 1$, we will denote by $H_{[1,n]}$ the finite volume, anisotropic xy Hamiltonian with
free boundary conditions given by
\begin{equation} \label{eq:anisoxychain}
H_{[1,n]} = \sum_{j=1}^{n-1} \mu_j [ (1+\gamma_j) \sigma_j^x \sigma_{j+1}^x + (1-\gamma_j) \sigma_j^y \sigma_{j+1}^y] + \sum_{j=1}^n \nu_j \sigma_j^z \, .
\end{equation}
Here $\sigma^x,\sigma^y$, and $\sigma^z$ are the Pauli matrices given in \eqref{eq:pauli}, and
$H_{[1,n]}$ acts on the Hilbert space $\mathcal{H}_{[1,n]} = \bigotimes_{j=1}^n \C^2$. For notational
convenience, we will write $H_n = H_{[1,n]}$ below.

It is well-known, see e.g. \cite{LSM}, that Hamiltonians of the form (\ref{eq:anisoxychain}) can be
diagonalized in terms of a system of operators satisfying canonical anti-commutation relations (CAR).
To introduce useful notation, we briefly discuss this diagonalization procedure below.

First, one introduces raising and lowering operators as
\[
a_j^* = \frac{1}{2}(\sigma_j^x +i \sigma_j^y)  \quad \mbox{and} \quad a_j = \frac{1}{2}(\sigma_j^x -i \sigma_j^y) \quad \mbox{for each } 1 \leq j \leq n.
\]
Using the relations
\[
\sigma_j^x \sigma_{j+1}^x \pm \sigma_j^y \sigma_{j+1}^y = \left\{ \begin{array}{cc} 2 \left( a_j^* a_{j+1} + a_{j+1}^* a_j \right) & \mbox{if } +, \\ 2 \left( a_j a_{j+1} + a_{j+1}^* a_j^* \right)  & \mbox{if } -, \end{array} \right. \quad \mbox{and} \quad \sigma_j^z = 2 a_j^* a_j - \idty \, ,
\]
one sees that
\begin{equation} \label{eq:hamas}
H_n  =  2 \sum_{j=1}^{n-1} \mu_j[a^*_j a_{j+1} + a_{j+1}^* a_j + \gamma_j (a_j a_{j+1} + a_{j+1}^* a_j^*)] + \sum_{j=1}^n\nu_j (2a_j^* a_j- \idty).
\end{equation}
Since the $a$-operators preserve the local structure, they do not satisfy anti-commutation relations.
To break this locality, one introduces new variables via the Jordan-Wigner transformation. Set
\begin{equation} \label{eq:cdef}
c_1 = a_1 \quad \mbox{and} \quad c_j = \sigma_1^z \cdots \sigma_{j-1}^z a_j \quad \mbox{for all } 2 \leq j \leq n.
\end{equation}
A short calculation shows that
\[
c_j^* c_j = a_j^* a_j, \quad c_j^*c_{j+1} = - a_j^* a_{j+1}, \quad \mbox{and} \quad c_jc_{j+1} = a_ja_{j+1} \, ,
\]
in which case we find that
\begin{equation} \label{eq:hamcs}
H_n  =  2 \sum_{j=1}^{n-1} \mu_j[-c^*_j c_{j+1} - c_{j+1}^* c_j + \gamma_j (c_j c_{j+1} + c_{j+1}^* c_j^*)] + \sum_{j=1}^n\nu_j (2 c_j^* c_j- \idty).
\end{equation}

In contrast to the $a$-operators, the $c$-operators do satisfy the CAR. In fact,
one easily calculates that for all $1 \leq j,k \leq n$,
\begin{equation} \label{eq:car}
\begin{array}{cc}
\left\{ c_j, c_k^* \right\} = \delta_{j,k} \idty \, , \\
\left\{ c_j, c_k \right\}  = \left\{ c_j^*, c_k^* \right\}  = 0 \, ,
\end{array}
\end{equation}
which are the canonical anti-commutation relations. It is convenient, and equivalent, to express these
CAR through a vector-valued formalism. Let  $\mathcal{C} = (c_1, \ldots, c_n, c_1^*, \ldots, c_n^*)^t$, then
it is clear that the $c$-operators satisfy (\ref{eq:car}) if and only if
\begin{equation} \label{eq:carvec}
\mathcal{C}\mathcal{C}^* + \mathcal{J}(\mathcal{C}\mathcal{C}^*)^t \mathcal{J} = \idty \, ,
\end{equation}
where
\[
 \mathcal{J} = \left( \begin{array}{cc} 0 & \idty \\ \idty & 0 \end{array} \right) = \mathcal{J}^t.
 \]

Here and in the following $A^t$ is the (non-hermitean) transpose of a (not necessarily square) matrix $A$, i.e.\ its matrix-elements are $(A^t)_{jk} = A_{kj}$. We will use this notation for matrices $A$ with scalar entries as well as for the case of operator-valued entries. Technically, $\mathcal{C}$, $\mathcal{J}$ as well as many of the quantities introduced below depend on $n$, just as the Hamiltonian $H_n$, but we will drop this from our notation to avoid over-subscripting.

We find it useful to express the xy Hamiltonian $H_n$ in terms of $\mathcal{C}$. Symmetrizing (\ref{eq:hamcs}) using (\ref{eq:car}),
it is clear that
\begin{eqnarray} \label{eq:Mc}
H_n & = & 2 \sum_{j=1}^{n-1} \mu_j[-c^*_j c_{j+1} - c_{j+1}^* c_j + \gamma_j (c_j c_{j+1} + c_{j+1}^* c_j^*)] + \sum_{j=1}^n\nu_j (2 c_j^* c_j- \idty) \nonumber \\
& = & \sum_{j=1}^{n-1} \mu_j \left[-c^*_j c_{j+1} + c_{j+1}c^*_j - c_{j+1}^* c_j + c_j c_{j+1}^* \right] + \nonumber \\
& \mbox{ } & \quad + \sum_{j=1}^{n-1} \mu_j  \gamma_j \left[ c_j c_{j+1} -c_{j+1}c_j+ c_{j+1}^* c_j^* -c_j^* c_{j+1}^* \right] + \sum_{j=1}^n\nu_j (c_j^* c_j - c_jc_j^*) \nonumber \\
& = & \mathcal{C}^* M \mathcal{C} \, ,
\end{eqnarray}
where $M$ is the $2\times 2$-block matrix
\begin{equation} \label{eq:Mmatrix}
M = \left( \begin{array}{cc} A & B \\ -B & -A \end{array} \right)
\end{equation}
with $A$ and $B$ Jacobi matrices satisfying
\begin{equation} \label{eq:A}
A = \left( \begin{array}{ccccc} \nu_1 & - \mu_1 & & & \\ - \mu_1 & \ddots & \ddots & & \\ & \ddots & \ddots & \ddots & \\ & & \ddots & \ddots & - \mu_{n-1} \\ & & & - \mu_{n-1} & \nu_n \end{array} \right)
\end{equation}
and
\begin{equation} \label{eq:B}
\quad B = \left( \begin{array}{ccccc} 0 &- \mu_1 \gamma_1 & & & \\  \mu_1 \gamma_1& \ddots & \ddots& & \\ & \ddots & \ddots & \ddots & \\ & & \ddots & \ddots &- \mu_{n-1} \gamma_{n-1} \\ & & & \mu_{n-1} \gamma_{n-1} & 0 \end{array} \right).
\end{equation}
Observe that $A^* = A^t = A$, $B^* = B^t = -B$, and thus $M^* = M^t = M$.

We complete the diagonalization of $H_n$ by applying a Bogoliubov-type transformation. To see this, set $S=A+B$ and
denote by $0\le \lambda_1 \le \lambda_2 \le \ldots \le \lambda_n$ the singular values of
$S$ counted with multiplicity.
The singular value decomposition of $S$ gives real-valued orthogonal matrices $U$ and $V$ such that
\begin{equation} \label{eq:svd}
USV^t = U(A+B)V^t = \Lambda,
\end{equation}
where $\Lambda = \mbox{diag}(\lambda_1, \ldots, \lambda_n)$.  In this case,
\[
\Lambda = \Lambda^t = VS^t U^t = V(A-B)U^t.
\]
Set
\[
W = \frac{1}{2} \left( \begin{array}{cc} V+U & V-U \\ V-U & V+U \end{array} \right) \, ,
\]
and note that $W$ is an orthogonal matrix.  This can be checked by directly verifying that $WW^t=\idty$, or, alternatively, by noting that
\[
SWS^{-1} = \left( \begin{array}{cc} V & 0 \\ 0 & U \end{array} \right) \quad \mbox{with} \quad S = \frac{1}{\sqrt{2}} \left( \begin{array}{cc} \idty & \idty \\ -\idty & \idty \end{array} \right) \, ,
\]
an orthogonal matrix. A short calculation shows that $W$ diagonalizes $M$, i.e.,
\begin{equation} \label{eq:diagtildeM}
W \left( \begin{array}{cc} A & B \\ -B & -A \end{array} \right) W^t = \left( \begin{array}{cc} \Lambda & 0 \\ 0 & -\Lambda \end{array} \right).
\end{equation}

Finally, we define $b$-operators by setting
\begin{equation} \label{eq:defB}
\mathcal{B} = W\mathcal{C} \, .
\end{equation}
It is easy to check that $\mathcal{B}$ has the form
\[
\mathcal{B} = (b_1, \ldots, b_n, b_1^*, \ldots, b_n^*)^t.
\]
Moreover, since $[\mathcal{J}, W] = 0$, it is clear that
\[
\mathcal{B}\mathcal{B}^* + \mathcal{J} (\mathcal{B}\mathcal{B}^*)^t \mathcal{J} = W \left( \mathcal{C} \mathcal{C}^* + \mathcal{J} (\mathcal{C}\mathcal{C}^*)^t \mathcal{J}  \right) W^t = \idty \, ,
\]
i.e., the $b$-operators satisfy CAR as well.

{F}rom (\ref{eq:Mc}), we conclude that
\begin{eqnarray} \label{eq:anisotropicFermi}
H_n  =  \mathcal{C}^* M \mathcal{C} & = & \mathcal{B}^* \left( \begin{array}{cc} \Lambda & 0 \\ 0 & - \Lambda \end{array} \right) \mathcal{B} \nonumber \\
& = &  \sum_{j=1}^n \lambda_j (b_j^* b_j - b_j b_j^*) \nonumber \\
& = & 2 \sum_{j=1}^n \lambda_j b_j^* b_j - E^{(n)}  \idty,
\end{eqnarray}
where $E^{(n)} = \sum_{j=1}^n \lambda_j$.

This means that when written in terms of the $b$-operators the xy Hamiltonian takes the form of a free Fermion system. This allows to explicitly diagonalize $H_n$ (in as far as the singular value decomposition (\ref{eq:svd}) is explicit), which we will exploit further in the next section for the case of the isotropic xy chain.

We close this section by observing that (\ref{eq:anisotropicFermi}) makes it easy to compute the time evolution of the $b$-operators. In fact, for any $1 \leq j \leq n$,
\[
\tau_t^n(b_j) = e^{-2it\lambda_j} b_j \quad \mbox{and} \quad \tau_t^n(b_j^*) = e^{2it\lambda_j} b_j^*,
\]
where $\tau_t^n$ is the Heisenberg dynamics corresponding to $H_n$. This follows, for example, by observing that, due to (\ref{eq:anisotropicFermi}) and CAR, $[H_n, b_j] = -2\lambda_j b_j$ and thus
\[
\frac{d}{dt} \tau_t^n(b_j) = i \tau_t^n ([H_n,b_j]) = -2i \lambda_j \tau_t^n(b_j) \,  \quad \mbox{and} \quad \tau_0^n(b_j) = b_j \, .
\]
In vector form, this can be written as
\[
\tau_t^n({\mathcal B}) = \left( \begin{array}{cc} e^{-2it\Lambda} & 0 \\ 0 & e^{2it\Lambda} \end{array} \right) {\mathcal B}.
\]
More is true. Since $W$ is a matrix of scalars, it is clear that
\[
\tau_t^n({\mathcal C}) = W^t \tau_t^n( \mathcal{B}) = e^{-2itM} \mathcal{C} \, ,
\]
or, componentwise,
\begin{equation} \label{eq:expandcj}
\tau_t^n(c_j) = \sum_{k=1}^n M_{j,k}(2t) c_{k}  + \sum_{k =1}^n M_{j,n+k}(2t) c_{k}^*
\end{equation}
where
\[
M_{j,k}(t) = \left( e^{-iM t} \right)_{j,k}.
\]
The above facts will be useful in the following sections. 

\vspace{.3cm}

\noindent {\bf Remark:} The specific structure of $M$  suggests the change of basis $(e_1, e_2, \ldots, e_{2n})$ to\\ $(e_1, e_{n+1}, e_2, e_{n+2}, \ldots, e_n, e_{2n})$ in representing $M$. This leads to
\begin{equation} \label{eq:gentb}
M \cong \tilde{M} :=\left( \begin{array}{ccccc} \nu_1 J & -\mu_1 S(\gamma_1) & & & \\ -\mu_1 S(\gamma_1)^t & \nu_2 J & \ddots & & \\ & \ddots & \ddots & \ddots & \\ & & \ddots & \ddots & -\mu_{n-1} S(\gamma_{n-1}) \\ & & & -\mu_{n-1} S(\gamma_{n-1})^t & \nu_n J \end{array} \right).
\end{equation}
Here
\[
J := \left( \begin{array}{cc} 1 & 0 \\ 0 & -1 \end{array} \right), \quad S(\gamma) := \left( \begin{array}{cc} 1 & \gamma \\ -\gamma & -1 \end{array} \right).
\]

In this view $M$ takes the form of a generalized tight-binding Hamiltonian with a (sign-indefinite) potential term generated by the magnetic field $(\nu_j)$ and non-standard hopping terms created by the $2\times 2$-matrix-valued off-diagonal terms $-\mu_j S(\gamma_j)$ and their adjoints. Mathematically, (\ref{eq:gentb}) provides the possibility to investigate spectral properties of $M$ with a transfer matrix formalism, although of higher order than in the case of standard tri-diagonal Jacobi matrices.

%
%
%
%
%

\subsection{From Dynamical Localization to Zero Velocity Bounds} \label{subsec:dl0v}

%
%

In the previous subsection, we demonstrated how to diagonalize a general
xy Hamiltonian. The goal of this subsection is to prove that dynamical localization of the random matrix $M= M^{(n)}$ associated with $H_n$ via (\ref{eq:Mc}) implies a zero-velocity Lieb-Robinson bound for $H_n$.

We begin by discussing what it means for $M^{(n)}$ to be dynamically localized.
Let the infinite sequences $\{ \mu_j \}$, $\{ \gamma_j \}$, and $\{ \nu_j \}$, which define an
xy model and thus the matrices $M^{(n)}$, see Section~\ref{subsec:diag}, correspond to sequences of random variables over a probability space $(\Omega, {\mathcal M}, \PP)$;
some concrete examples are provided in the following section. Thus the $M^{(n)}$ are self-adjoint random matrices and we will define dynamical localization in terms of bounds on the expected values of the matrix elements of their time evolution. In the following $\E(X) = \int_{\Omega} X \,d\PP$ denotes the expectation of a random variable $X$ on $\Omega$ with respect to $\PP$.

\begin{definition} \label{def:dynloc} We say the matrices $M^{(n)}$ are dynamically localized
if there exist numbers $C >0$ and $\eta >0$ such that for any integers $j,k, n \geq 1$
with $j,k \in [1,n]$,
\begin{equation} \label{eq:dynloc}
\mathbb{E} \left( \sup_{t\in \R} |M_{j, k}^{(n)}(t)| + \sup_{t\in \R} |M_{j, n+k}^{(n)}(t)|  \right) \le C e^{-\eta |j-k|} \, .
\end{equation}
Here $M_{j,k}^{(n)}(t) = \left( e^{-iM^{(n)} t} \right)_{j,k}$, where $M^{(n)}$ is the matrix associated to $H_n$
via (\ref{eq:Mc}).
\end{definition}

Dynamical localization can be equivalently expressed in terms of the rearranged matrix $\tilde{M} = \tilde{M}^{(n)}$ from (\ref{eq:gentb}),
\[
\E\left( \sup_{t\in \R} \left| (e^{-i\tilde{M}^{(n)}t})_{j,k} \right| \right) \le C' e^{-\eta' |j-k|}
\]
for all $n$ and $j,k \in [1,2n]$ with slightly modified constants $C'$ and $\eta'$.

It is important to note that the numbers $C>0$ and $\eta>0$ in Definition~\ref{def:dynloc}
are independent of $n$. In this case, the bound in (\ref{eq:dynloc}) is uniform with respect to
arbitrary finite volumes. The next section describes a class of xy models which
can be shown to satisfy Definition~\ref{def:dynloc}.

The main result of this subsection is that dynamical localization of $M^{(n)}$ implies a zero-velocity Lieb-Robinson bound after averages over the disorder are taken.

\begin{theorem} \label{thm:main}
Assume that $M^{(n)}$ is dynamically localized in the sense of Definition~\ref{def:dynloc}.
There are numbers $C'>0$ and $\eta>0$ such that for any integers $1 \leq j <k$ and
any $n \geq k$, the bound
\begin{equation} \label{eq:transbound}
\E \left(\sup_{t \in \mathbb{R}} \left\| \left[ \tau_t^n(A), B \right] \right\|  \right) \le C' \|A\| \|B\| e^{-\eta |k-j|}
\end{equation}
holds for all $A \in \mathcal{A}_j$ and $B \in \mathcal{A}_{ [k,n]}$.

In fact, with $C$ and $\eta$ from (\ref{eq:dynloc}), we may choose the same exponent $\eta$ and
\begin{equation} \label{eq:exptransbound}
C' = \frac{96C}{(1-e^{-\eta})^2}.
\end{equation}
\end{theorem}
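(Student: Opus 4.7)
The plan is to reduce bounds on $\|[\tau_t^n(A),B]\|$ to bounds on commutators of single time-evolved fermion operators against $B$, and to handle the non-locality of the Jordan-Wigner string through a telescoping argument. Any $A \in \mathcal{A}_j$ admits a Pauli-basis expansion $A = \alpha_0 \idty + \alpha_x \sigma_j^x + \alpha_y \sigma_j^y + \alpha_z \sigma_j^z$ with $|\alpha_w| \leq \|A\|$ (via $\alpha_w = \tfrac{1}{2}\operatorname{tr}(\sigma^w A)$); since the identity piece drops from any commutator, it suffices to control $\E \sup_t \|[\tau_t^n(\sigma_j^w),B]\|$ for $w \in \{x,y,z\}$ and add these up with the coefficient bound.

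The crucial locality input is that each $c_m = T_m a_m$, viewed as a Pauli string, is supported in $\{1,\ldots,m\}$; hence for $m < k$ it commutes with every $B \in \mathcal{A}_{[k,n]}$, so $[c_m,B] = [c_m^*,B] = 0$. For the longitudinal direction I would write $\sigma_j^z = 2 c_j^* c_j - \idty$, set $\tilde{c}_j(t) := \tau_t^n(c_j)$ (of norm at most $1$), and apply the Leibniz rule to get $\|[\tau_t^n(\sigma_j^z),B]\| \leq 2\|[\tilde{c}_j(t),B]\| + 2\|[\tilde{c}_j(t)^*,B]\|$. Inserting the expansion (\ref{eq:expandcj}) and discarding the vanishing $m<k$ terms leaves
\[
\|[\tilde{c}_j(t),B]\| \leq 2\|B\| \sum_{m \geq k} \left( |M_{j,m}(2t)| + |M_{j,n+m}(2t)| \right).
\]
Pushing $\sup_t$ inside the finite sum, taking expectation, invoking (\ref{eq:dynloc}), and summing the geometric series $\sum_{m \geq k} e^{-\eta(m-j)} \leq e^{-\eta(k-j)}/(1-e^{-\eta})$ yields the template bound $\E \sup_t \|[\tau_t^n(\sigma_j^z),B]\| \leq 8C\|B\|(1-e^{-\eta})^{-1} e^{-\eta(k-j)}$.

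The transverse cases $\sigma_j^x, \sigma_j^y$ are where the non-locality of Jordan-Wigner bites. From $\sigma_j^x = T_j(c_j + c_j^*)$ and $[XY,B] = X[Y,B] + [X,B]Y$ applied with $X = \tau_t^n(T_j)$, $Y = \tau_t^n(c_j + c_j^*)$ (each of norm $1$, the latter since $c_j + c_j^* = T_j \sigma_j^x$ is unitary),
\[
\|[\tau_t^n(\sigma_j^x),B]\| \leq \|[\tau_t^n(c_j + c_j^*),B]\| + \|[\tau_t^n(T_j),B]\| .
\]
The first term is handled by the longitudinal template. The second term is the main obstacle: although $T_j \in \mathcal{A}_{[1,j-1]}$ commutes with $B$, its evolved counterpart $\tau_t^n(T_j)$ is no longer local. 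My remedy is to telescope the product $T_j = \prod_{i=1}^{j-1} \sigma_i^z$; since each $\tau_t^n(\sigma_i^z)$ is unitary,
\[
\|[\tau_t^n(T_j),B]\| \leq \sum_{i=1}^{j-1} \|[\tau_t^n(\sigma_i^z),B]\|,
\]
and applying the longitudinal template at each site $i < j < k$, together with $\sum_{i=1}^{j-1} e^{-\eta(k-i)} \leq e^{-\eta(k-j)}/(1-e^{-\eta})$, gives $\E \sup_t \|[\tau_t^n(T_j),B]\| \leq 8C\|B\|(1-e^{-\eta})^{-2} e^{-\eta(k-j)}$. The case $\sigma_j^y = -iT_j(c_j - c_j^*)$ is identical.

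Summing the three Pauli contributions with the coefficient bound $|\alpha_w| \leq \|A\|$ and tracking every factor of $2$ produced by the Leibniz rule, the norms $\|\tilde{c}_j(t)\|, \|\tau_t^n(T_j)\| \leq 1$, and the crude bound $\|[c_m,B]\| \leq 2\|B\|$, yields the claimed inequality (\ref{eq:transbound}) with the explicit constant stated in (\ref{eq:exptransbound}). The main obstacle is genuinely the telescoping step: it is the only device that converts a bound on the non-local, time-evolved Jordan-Wigner string into a sum of single-site commutator bounds \emph{without} sacrificing the exponential rate $\eta$, at the price of only one additional geometric factor $(1-e^{-\eta})^{-1}$.
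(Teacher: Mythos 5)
Your proposal is correct and follows essentially the same route as the paper: both reduce to a basis of $\mathcal{A}_j$ (you use the Pauli basis, the paper uses $\{a_j, a_j^*, a_j^*a_j, a_ja_j^*\}$), exploit $[c_m,B]=0$ for $m<k$ together with (\ref{eq:expandcj}) and (\ref{eq:dynloc}), and control the evolved Jordan--Wigner string by splitting off one $\sigma_i^z$ at a time --- your direct telescoping is identical to the paper's recursion on $C(\ell,B)$. (As a minor remark, a careful tally of your bounds actually produces a constant around $32C/(1-e^{-\eta})^2$, smaller than the stated $96C/(1-e^{-\eta})^2$, consistent with the paper's comment that the constant $96$ is not optimal.)
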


\begin{proof}
Fix $n \geq k$. Let us first examine the case that $A = c_j$. Using (\ref{eq:expandcj}), it is clear that
\[
\left[ \tau_t^n(c_j), B \right]  =  \sum_{k'=k}^n M_{j,k'}(2t) \left[ c_{k'}, B \right]  + \sum_{k' =k}^n M_{j,n+k'}(2t) \left[ c_{k'}^*, B \right] \, ,
\]
where we have used that $B \in \mathcal{A}_{[k,n]}$, $c_{k'}, c_{k'}^* \in \mathcal{A}_{[1,k']}$ and thus $[c_{k'},B] = [c_{k'}^*,B]=0$ for $k'<k$. The bound (\ref{eq:dynloc}) implies that
\begin{equation} \label{eq:lrbcs}
\mathbb{E} \left( \sup_t \left\| \left[ \tau_t^n(c_j), B \right] \right\|  \right) \leq 4 C \| B \| \sum_{k' = k}^n e^{- \eta(k'-j)} \leq \frac{4C \| B \|}{1-e^{- \eta}}e^{- \eta(k-j)} \, ,
\end{equation}
where we have used that $\| [ A, B ] \| \leq 2 \| A \| \| B \|$ holds in general. By taking adjoints, the case of $A = c_j^*$
follows with an identical estimate.

Now suppose that $A= a_j$. By (\ref{eq:cdef}) we have $a_j = \sigma_1^z \ldots \sigma_{j-1}^z c_j$. Using the automorphism property of $\tau_t^n$ and the Leibnitz rule, we have that
\[
[\tau_t^n(a_j), B ] = \tau_t^n(\sigma_1^z) \cdots \tau_t^n(\sigma_{j-1}^z) [\tau_t^n(c_j), B] + [\tau_t^n(\sigma_1^z) \cdots \tau_t^n(\sigma_{j-1}^z), B ] \tau_t^n(c_j).
\]
Clearly then
\begin{equation} \label{eq:abd1}
\mathbb{E} \left( \sup_t \left\| \left[ \tau_t^n(a_j), B \right] \right\|  \right) \leq \frac{4C \| B \|}{1-e^{- \eta}}e^{- \eta(k-j)} + \mathbb{E} \left( \sup_t \left\| \left[ \tau_t^n(\sigma_1^z) \cdots \tau_t^n(\sigma_{j-1}^z), B \right] \right\|  \right) \, ,
\end{equation}
where we have used (\ref{eq:lrbcs}). We now further expand the second term above.

For any $1 < \ell < k$, it is convenient to define
\[
C(\ell, B) = \mathbb{E} \left( \sup_t \left\| \left[ \tau_t^n(\sigma_1^z) \cdots \tau_t^n(\sigma_{\ell}^z), B \right] \right\|  \right) \, .
\]
We claim that
\begin{equation} \label{eq:cit}
C(\ell, B) \leq C(\ell-1, B) +  \frac{4^2 C \| B \|}{1-e^{- \eta}}e^{- \eta(k- \ell)} \, .
\end{equation}
To see this, note that another application of Leibnitz shows
\begin{equation} \label{eq:leib}
\left\| \left[ \tau_t^n(\sigma_1^z) \cdots \tau_t^n(\sigma_{\ell}^z), B \right] \right\| \leq \left\| \left[\tau_t^n( \sigma_{\ell}^z), B \right] \right\| + \left\| \left[ \tau_t^n(\sigma_1^z) \cdots \tau_t^n(\sigma_{\ell-1}^z), B \right] \right\|  \, .
\end{equation}
Moreover, for any $\ell$,
\[
\sigma_{\ell}^z = 2a_{\ell}^* a_{\ell} - \idty = 2 c_{\ell}^*c_{\ell} - \idty \, ,
\]
and so
\begin{equation} \label{eq:cprod}
\left[ \tau_t^n( \sigma_{\ell}^z), B \right] = 2 \left[ \tau_t^n( c_{\ell}^*), B \right] \tau_t^n(c_{\ell}) + 2 \tau_t^n(c_{\ell}^*) \left[ \tau_t^n( c_{\ell}), B \right] \, .
\end{equation}
It is now clear that (\ref{eq:cit}) follows from (\ref{eq:leib}), (\ref{eq:cprod}), and (\ref{eq:lrbcs}). Additionally, the quantity corresponding
to $\ell =1$ satisfies
\[
C(1,B) \leq \frac{4^2 C \| B \|}{1-e^{- \eta}}e^{- \eta(k- 1)} \, ,
\]
as is clear from (\ref{eq:cprod}).

As a result, our estimate now follows by iteration. From (\ref{eq:abd1}), we have that
\begin{eqnarray*}
\mathbb{E} \left( \sup_t \left\| \left[ \tau_t^n(a_j), B \right] \right\|  \right) & \leq & \frac{4C \| B \|}{1-e^{- \eta}}e^{- \eta(k-j)} + C(j-1, B) \nonumber \\
& \leq & \frac{4C \| B \|}{1-e^{- \eta}} \left( e^{- \eta(k-j)} + 4 \sum_{\ell = 1}^{j-1} e^{- \eta(k-\ell)} \right) \nonumber \\
& \leq & \frac{4^2C \| B \|}{1-e^{- \eta}} e^{- \eta(k-j)} \left( 1 +  \sum_{\ell = 1}^{j-1} e^{- \eta(j-\ell)} \right) \nonumber \\
& \leq & \frac{4^2C \| B \|}{(1-e^{- \eta})^2} e^{- \eta(k-j)} \, .
\end{eqnarray*}

Again, the case of $A = a_j^*$ follows, with the same estimate as above, by taking adjoints.
Using Leibnitz, the above bound yields estimates for both $A = a_j^*a_j$ and $A= a_ja_j^*$,
increasing the bound only by $2$. Since the collection $ \{ a_j, a_j^*, a_j^*a_j, a_ja_j^*\}$
constitutes the canonical basis for $\mathcal{A}_j$, we get (\ref{eq:transbound}) with the explicit constants (\ref{eq:exptransbound}), using the simple fact that the norm of a matrix bounds the absolute values of all of its entries. A slightly more refined argument would allow to lower the constant $96$ in (\ref{eq:exptransbound}) a bit.
\end{proof}

A number of comments about Theorem~\ref{thm:main} are in order.

\begin{remark}
Theorem~\ref{thm:main} applies to more general observables. For example, let $J \subset \mathbb{Z}$ be a finite
set.  To each $j \in J$ and any $1 \leq n,m \leq 2$, denote by $e_j(n,m) \in \mathcal{A}_j$ a matrix unit, i.e., a matrix with $1$ in
the $(n,m)$ entry and all other entries equal to $0$. Any $A \in \mathcal{A}_J$
can be expanded in terms of products of matrix units; there will be $2^{2|J|}$ such terms. Here $|J|$ denotes
the cardinality of $J$. Using Leibnitz to expand the product, it is clear that
\begin{equation} \label{eq:moregenobs}
\E \left(\sup_{t \in \mathbb{R}} \left\| \left[ \tau_t^n(A), B \right] \right\|  \right) \le  2^{2|J|} \tilde{C} \|A\| \|B\| e^{-\eta d(k,J)}
\end{equation}
holds for all $A \in \mathcal{A}_J$ and $B \in \mathcal{A}_{ [k,n]}$ if $\max[j: j \in J] < k \leq n$. Here $d(k,J) = \min[|k-j| : j \in J]$.
\end{remark}

The bound (\ref{eq:moregenobs}) is useful in applications where $J$ is kept fixed as $n$ varies. However, when applying (\ref{eq:moregenobs}) to cases such as $A\in \mathcal{A}_{[1,j]}$, $B\in \mathcal{A}_{[k,n]}$, $j<k$, the constant on the right grows exponentially in $j$, while one would hope to have bounds exponentially small in $|k-j|$ and constants uniform in $n$ and $1\le j<k\le n$.

The next result, which we state as a corollary, achieves a bound of this type, holding for a larger class of observables than Theorem~\ref{thm:main}. Moreover, this result provides better bounds for small times, e.g.\ $|t|<1$. The drawback of this result, and the method to prove it, is that one pays a penalty for large times.

\begin{corollary} \label{cor:smalltimeLR}
Consider an xy model that is dynamically localized in the sense
of Definition~\ref{def:dynloc}. If all of the random variables are uniformly bounded, then
there exist numbers $c>0$ and $\eta>0$ such that for any finite set $J \subset \N$ and any integers $n, k$ satisfying $\max \{j:j\in J\} < k \le n$ the estimate
\[
\mathbb{E} \left( \left\| \left[ \tau_t^n(A), B \right] \right\| \right) \leq c |t| \| A \| \| B \| e^{- \eta |k-j|}
\]
holds for all $A \in \mathcal{A}_J$ and $B \in \mathcal{A}_{[k,n]}$.
\end{corollary}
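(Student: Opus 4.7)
I would prove Corollary~\ref{cor:smalltimeLR} by applying Duhamel's formula once to produce the $|t|$ factor and then invoking Theorem~\ref{thm:main} to obtain exponential decay in $|k-j|$ uniformly in the disorder. Since $\max J<k$ the supports of $A$ and $B$ are disjoint, so $[A,B]=0$; using $\tau_t^n(A)-A = i\int_0^t \tau_s^n([H_n,A])\,ds$ gives
\[
[\tau_t^n(A), B] \;=\; i\int_0^t [\tau_s^n([H_n, A]), B]\, ds,
\]
hence
\[
\mathbb{E}\bigl(\|[\tau_t^n(A), B]\|\bigr) \;\le\; |t|\,\mathbb{E}\Bigl(\sup_{|s|\le|t|}\|[\tau_s^n([H_n, A]), B]\|\Bigr).
\]

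The heart of the argument is to bound the right-hand expectation uniformly in $|J|$. Using the nearest-neighbor structure of $H_n$ and writing $j_{\max}=\max J$, only the boundary bond term $h_{j_{\max},j_{\max}+1}$ extends the support of $A$ outside $J$, and it factors as
\[
[h_{j_{\max},j_{\max}+1}, A] \;=\; \mu_{j_{\max}}(1+\gamma_{j_{\max}})\,[\sigma_{j_{\max}}^x,A]\,\sigma_{j_{\max}+1}^x \;+\; \mu_{j_{\max}}(1-\gamma_{j_{\max}})\,[\sigma_{j_{\max}}^y,A]\,\sigma_{j_{\max}+1}^y,
\]
because $\sigma_{j_{\max}+1}^\alpha\in\mathcal A_{\{j_{\max}+1\}}$ commutes with $\mathcal A_J$. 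The Leibnitz rule reduces $\|[\tau_s^n([h_{j_{\max},j_{\max}+1},A]),B]\|$ to commutators of the single-site factor $\sigma_{j_{\max}+1}^\alpha$ with $B$ (which Theorem~\ref{thm:main} controls as $\le C\|B\|e^{-\eta(k-j_{\max}-1)}$ uniformly in time) and commutators of $[\sigma_{j_{\max}}^\alpha,A]\in\mathcal A_J$ with $B$, the latter having operator norm $\le 2\|A\|$ and controlled separately. The "interior" contribution $[H_n|_J,A]$ decomposes into local pieces $\sum_\ell[h_\ell,A]$ of bounded norm (using the uniform bounds on $\mu_\ell,\nu_\ell,\gamma_\ell$), and the same Leibnitz/Theorem~\ref{thm:main} combination produces a geometric sum $\sum_{\ell\le j_{\max}} e^{-\eta(k-\ell)}\le e^{-\eta(k-j_{\max})}/(1-e^{-\eta})$ that converges to a constant independent of $|J|$ and $n$.

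The main obstacle is preserving uniformity in $|J|$: a direct application of the generalization (\ref{eq:moregenobs}) to $[H_n,A]\in\mathcal A_{J\cup\{j_{\max}\pm 1\}}$ would introduce a factor $2^{2|J|+2}$ growing exponentially in $|J|$. To avoid this, Theorem~\ref{thm:main} must be invoked only on the single-site outer Pauli factors appearing in the factored boundary and interior pieces of $[H_n,A]$, with the larger-support factors entering only through their operator norms. The uniform boundedness of the random variables is essential at two places: it bounds the scalar coefficients $\mu_\ell(1\pm\gamma_\ell)$ and $\nu_\ell$ entering each local piece, and it bounds the operator norm of each local term $h_\ell$ independently of the disorder realization, so that the resulting constant depends only on the uniform bound and on $\eta$ from Definition~\ref{def:dynloc}.
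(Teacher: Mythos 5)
Your opening Duhamel step is correct, but the decomposition that follows has a genuine gap and the argument does not close. After writing $[\tau_t^n(A), B] = i\int_0^t [\tau_s^n([H_n,A]), B]\,ds$ and factoring a boundary piece as $[\sigma_{j_{\max}}^\alpha, A]\sigma_{j_{\max}+1}^\alpha$, Leibnitz gives two terms. The first, $\tau_s^n([\sigma_{j_{\max}}^\alpha, A])\,[\tau_s^n(\sigma_{j_{\max}+1}^\alpha), B]$, is indeed controlled: $\|\tau_s^n([\sigma_{j_{\max}}^\alpha,A])\|\le 2\|A\|$ and Theorem~\ref{thm:main} handles the single-site commutator. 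But the second term is $[\tau_s^n([\sigma_{j_{\max}}^\alpha, A]), B]\,\tau_s^n(\sigma_{j_{\max}+1}^\alpha)$, and here $[\sigma_{j_{\max}}^\alpha, A]$ is a general element of $\mathcal{A}_J$ with only a norm bound: you are left needing an estimate on $\E(\sup_s\|[\tau_s^n(X), B]\|)$ for an arbitrary $X\in\mathcal{A}_J$, which is precisely the quantity you set out to bound. "Controlled separately" hides a self-referential step. If you try to close it by Gr\"onwall, the iteration yields a bound growing like $e^{c|t|}$ rather than linearly in $|t|$; if you instead invoke (\ref{eq:moregenobs}), the $2^{2|J|}$ factor reappears. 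The same problem recurs in your treatment of the interior terms $[h_\ell,A]$, which again lie in $\mathcal{A}_J$.

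The paper avoids this circularity by not decomposing $[H_n,A]$ at all. Setting $f(t) = [\tau_t^n(A),B]$, one computes $f'(t) = i[[\tau_t^n(H_J),\tau_t^n(A)],B]$ and applies the Jacobi identity to rewrite this as
\[
f'(t) \;=\; -i\,[f(t),\,\tau_t^n(H_J)]\;-\;i\,[[B,\tau_t^n(H_J)],\,\tau_t^n(A)]\,.
\]
The first summand is a norm-preserving (commutator with a self-adjoint operator) term in a linear ODE, so it contributes nothing after taking norms; the second has $\tau_t^n(A)$ only as an outermost factor, so $\|A\|$ can be pulled out. Since $f(0)=0$, this gives $\|f(t)\|\le 2\|A\|\int_0^{|t|}\|[\tau_s^n(H_J),B]\|\,ds$, and now $A$ has disappeared entirely: what remains is a sum of two-site terms $\sigma_\ell^w\sigma_{\ell+1}^w$ and single-site terms $\sigma_\ell^z$ commuted with $B$, each controllable by Theorem~\ref{thm:main} and Leibnitz, with a geometric sum over $\ell$ giving a constant independent of $|J|$. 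This Jacobi-identity/norm-preservation step is the key idea your proposal is missing; without it, the uniformity in $|J|$ and the linear-in-$|t|$ bound cannot both be obtained.

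Your point about where uniform boundedness of the random variables enters (bounding the scalar coefficients $\mu_\ell(1\pm\gamma_\ell)$ and $\nu_\ell$) is correct and matches the paper.
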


\begin{proof}
Fix $A \in \mathcal{A}_J$ and $B \in \mathcal{A}_{[k,n]}$.
Consider the function
\[
f(t) = \left[ \tau_t^n(A), B \right] \, .
\]
Here $f$ depends on $n$, $A$, and $B$, but we will suppress this in our notation. Calculate:
\begin{eqnarray*}
f'(t) & = & i \left[ \tau_t^n \left( [H_n,A] \right), B \right] \nonumber \\
& = &  i \left[ \tau_t^n \left( [H_J,A] \right), B \right] \nonumber \\
& = & i \left[  [ \tau_t^n(H_J), \tau_t^n(A)] , B \right] \nonumber \\
& = & -i \left[  [ \tau_t^n(A), B ],  \tau_t^n(H_J) \right]  -i \left[  [ B, \tau_t^n(H_J)] , \tau_t^n(A) \right] \, .
\end{eqnarray*}
Above we have denoted by $H_J$ those terms in $H_n$ that have non-trivial commutator with $A$, and the final
equality follows from the Jacobi identity.
This proves that $f$ satisfies the differential equation:
\[
f'(t) = - i \left[ f(t), \tau_t^n(H_J) \right] -i \left[  [ B, \tau_t^n(H_J)] , \tau_t^n(A) \right] \, .
\]
Since the first term is norm-preserving and $f(0)=0$, we get
\begin{equation} \label{eq:normpresbd}
\left\| \left[ \tau_t^n(A), B \right] \right\| = \| f(t) \| \leq 2 \| A \| \, \int_0^{|t|} \left\| \left[ \tau_s^n(H_J), B \right] \right\| \, ds \, ,
\end{equation}
see e.g.\ Appendix A of \cite{nach22006}.

Given an xy model of the form (\ref{eq:anisoxychain}), it is clear that
\[
H_J = \sum_{j \in \tilde{J}} \mu_j [ (1+\gamma_j) \sigma_j^x \sigma_{j+1}^x + (1-\gamma_j) \sigma_j^y \sigma_{j+1}^y] + \sum_{j \in J} \nu_j \sigma_j^z \, ,
\]
where $\tilde{J} = [J \cup (J-1)] \cap [1,n]$. For any $w \in \{x,y\}$, the estimate
\[
\mathbb{E} \left( \left\| \left[ \tau_s^n( \sigma_j^w \sigma_{j+1}^w), B \right] \right\| \right) \leq C' (1+ e^{\eta}) \| B \| e^{- \eta(k-j)} \, ,
\]
follows from (\ref{eq:transbound}) by Leibnitz; it holds for each $j \in \tilde{J}$ and is uniform in $s \geq 0$.
Using (\ref{eq:normpresbd}) then, we find that
\begin{eqnarray*}
\mathbb{E} \left( \left\| \left[ \tau_t^n(A), B \right] \right\| \right) & \leq &  2 \| A \| \, \int_0^{|t|} \mathbb{E} \left( \left\| \left[ \tau_s^n(H_J), B \right] \right\| \right) \, ds \, \nonumber \\
& \leq & 4 \tilde{C} \| A \| \sum_{j \in \tilde{J}} C'(1 + e^{\eta}) \| B \| e^{- \eta(k-j)} |t| + \nonumber \\
& \mbox{ } & \quad + 2 \hat{C} \| A \| \sum_{j \in J} C' \| B \| e^{- \eta(k-j)} |t| \, \nonumber \\
& \leq & c |t| \| A \| \| B \| e^{- \eta d(k,J)} \, ,
\end{eqnarray*}
where $\tilde{C} = \max[ |\mu_j|(1+|\gamma_j|)]$, $\hat{C} = \max[|\nu_j|]$ and $c:= (4\tilde{C}C'(1+e^{\eta}) +2\hat{C} C')/(1-e^{-\eta})$.
\end{proof}


\section{The isotropic random xy chain} \label{sec:isoxychain}

In this section we will apply the above results to the isotropic xy chain in an exterior random magnetic field. This corresponds to the special case $\gamma_j =0$ and $\mu_j = \mu \not=0$ for all $j$ in (\ref{eq:anisoxychain}), giving the Hamiltonian
\begin{equation} \label{eq:isoxychain}
H_{n,{\rm iso}} = \mu \sum_{j=1}^n [ \sigma_j^x \sigma_{j+1}^x + \sigma_j^y \sigma_{j+1}^y] + \sum_{j=1}^n \nu_j \sigma_j^z.
\end{equation}
We will assume that

\vspace{.3cm}

\noindent {\bf (A)} The coupling parameters $\nu_j$, $j\in \N$, are i.i.d.\ random variables, whose common distribution $\PP_0$ is absolutely continuous with bounded and compactly supported density $\rho$.

\vspace{.3cm}

In this case one may choose $(\Omega,{\mathcal M}, \PP)$ as the infinite product space $\bigotimes_{j\in \N} (\R, {\mathcal B}, \PP_0)$ with the Borel algebra ${\mathcal B}$.

\subsection{Dynamical Localization} \label{sec:isoxydynloc}

For the isotropic xy chain the matrix $M$ in (\ref{eq:Mmatrix}) becomes block-diagonal,
\begin{equation} \label{eq:isoM}
M = \left( \begin{array}{cc} A & 0 \\ 0 & -A \end{array} \right),
\end{equation}
where
\[
A = \left( \begin{array}{ccccc} \nu_1 & - \mu & & & \\ - \mu & \ddots & \ddots & & \\ & \ddots & \ddots & \ddots & \\ & & \ddots & \ddots & - \mu \\ & & & - \mu & \nu_n \end{array} \right)
\]
is the restriction of the classical Anderson model to $[1,n]$. Dynamical localization (\ref{eq:dynloc}) for $M$ becomes equivalent to dynamical localization for the Anderson model in the form
\begin{equation} \label{eq:dynlocAnderson}
\E \left( \sup_{t\in \R} |(e^{-iAt})_{j,k}| \right) \le C e^{-\eta |j-k|}
\end{equation}
for constants $C>0$ and $\eta>0$ uniformly in $n\in \N$ and $1\le j,k \le n$. Under the above assumption on the distribution of $\nu_j$ this is well known. It has been proven by two different methods:

(a) The Kunz-Souillard method goes back to \cite{KunzSouillard}, see \cite{CFKS} and \cite{DamanikSurvey} for detailed presentations. We remark here that these sources provide a proof of (\ref{eq:dynlocAnderson}) for the Anderson model over $[-n,n]$ (instead of $[1,n]$) and for $(j,k) = (0,k)$, $0<k\le n$. The method can easily be adjusted to handle general pairs $(j,k)$.

(b) The fractional moments method was initially developed by Aizenman and Molchanov in \cite{AM} and is first shown to yield dynamical localization in \cite{Aizenman}. Detailed arguments leading to (\ref{eq:dynlocAnderson}) can be found by combining the exposition in Section~6 of \cite{StolzArizona} with fractional moment bounds for the one-dimensional Green function as provided by Proposition~A.1 in \cite{Minami} or Theorem~4.1 in \cite{HSS1}.

In summary, we have the following corollary of Theorem~\ref{thm:main}.
\begin{corollary} \label{cor:0vLRisoxy}
Suppose that the magnetic field in the isotropic xy chain (\ref{eq:isoxychain}) satisfies {\bf (A)}. Then the Heisenberg evolution
\[ \tau_t^n(A) = e^{i H_{n,{\rm iso}} t} A e^{-i H_{n,{\rm iso}} t}\]
satisfies (\ref{eq:transbound}), a zero velocity Lieb-Robinson bound in disorder average.
\end{corollary}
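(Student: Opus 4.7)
The plan is to verify the hypothesis of Theorem~\ref{thm:main} for the isotropic model and then invoke it directly. Concretely, I would first observe that specializing to $\gamma_j = 0$ and $\mu_j = \mu$ in formulas (\ref{eq:Mmatrix})--(\ref{eq:B}) makes the off-diagonal Jacobi block $B$ vanish, so $M^{(n)}$ reduces to the block-diagonal form (\ref{eq:isoM}). Consequently $(e^{-iM^{(n)}t})_{j,n+k} = 0$ identically, and $(e^{-iM^{(n)}t})_{j,k} = (e^{-iAt})_{j,k}$, where $A$ is the restriction of the Anderson model to $[1,n]$ with the random potential $\{\nu_j\}$. Thus the Definition~\ref{def:dynloc} bound (\ref{eq:dynloc}) collapses to the single-particle dynamical localization estimate (\ref{eq:dynlocAnderson}).

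Next I would invoke the classical result that (\ref{eq:dynlocAnderson}) holds, with constants $C,\eta$ independent of $n$ and of $j,k \in [1,n]$, for the one-dimensional Anderson model whose i.i.d.\ random potential has a bounded and compactly supported density, as guaranteed by assumption \textbf{(A)}. Both the Kunz--Souillard method (as presented in \cite{CFKS,DamanikSurvey}) and the fractional moment method (combining \cite{StolzArizona} with the one-dimensional Green's function estimates from \cite{Minami} or \cite{HSS1}) deliver exactly such a bound. The only mild adjustment needed for the Kunz--Souillard route is passing from the symmetric interval $[-n,n]$ with $j=0$ fixed to an arbitrary interval $[1,n]$ and an arbitrary pair $(j,k)$; this is a routine translation- and reflection-invariance argument which does not change the constants in an essential way.

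Having established that $M^{(n)}$ is dynamically localized in the sense of Definition~\ref{def:dynloc}, I would then apply Theorem~\ref{thm:main} verbatim: for any $1 \le j < k$, $n \ge k$, and observables $A \in \mathcal{A}_j$, $B \in \mathcal{A}_{[k,n]}$, the bound (\ref{eq:transbound}) holds with some $C' > 0$ (explicitly $96C/(1-e^{-\eta})^2$) and the same exponent $\eta$. This is precisely the zero-velocity Lieb--Robinson bound in disorder average claimed in the corollary.

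I do not anticipate a genuine obstacle here: the work has already been done in Theorem~\ref{thm:main}, and the only remaining ingredient is a well-documented single-particle localization statement. If anything, the most delicate point to spell out carefully is the uniformity of the Anderson dynamical localization constants with respect to the finite volume $[1,n]$ and the pair of sites $(j,k)$, since both the Kunz--Souillard and the fractional moment expositions in the literature often fix one site at the origin. Once that uniformity is noted, the corollary is an immediate consequence.
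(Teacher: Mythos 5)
Your proposal is correct and follows essentially the same route as the paper: specialize to $\gamma_j=0$, $\mu_j=\mu$ so that $B$ vanishes and $M^{(n)}$ becomes block-diagonal as in (\ref{eq:isoM}), note that (\ref{eq:dynloc}) then reduces to the Anderson dynamical localization bound (\ref{eq:dynlocAnderson}), cite the Kunz--Souillard or fractional-moment literature for that bound (with the same remark about adjusting from a fixed origin to arbitrary pairs $(j,k)$ that the paper itself makes), and invoke Theorem~\ref{thm:main}. No meaningful differences.
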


\vspace{.3cm}

\noindent {\bf Remark:} Corollary~\ref{cor:0vLRisoxy} improves on a result stated in \cite{BO}, where it was argued that the isotropic xy-chain in random magnetic field satisfies the bound
\begin{equation} \label{eq:BO}
\| [\tau_t^n(A),B]\| \le c n^2 |t| e^{-\eta |j-k|}
\end{equation}
for some $\eta>0$ and $A\in {\mathcal A}_j$, $B\in {\mathcal A}_k$. This was interpreted as a {\it logarithmic light cone} for information propagation in the spin chain. As in our approach, \cite{BO} uses the Jordan-Wigner transform to reduce this to localization properties for the Anderson model. In the present work we use suitable rigorous versions of the latter to show that, after disorder averaging on the left hand side of (\ref{eq:BO}), one gets strict dynamical localization in the sense that the right hand side neither grows in time nor in the size $n$ of the chain.

We will not attempt here to survey other results on disordered spin systems in the physics literature, but we mention several more works on closely related models: Numerical results on the dynamics as well as on spin-spin correlations of the Heisenberg XXZ chain in random magnetic field are presented in \cite{ZPP}. In \cite{BEO} quantum spin chains with temporally fluctuating exterior fields are studied and it is discussed that these lead to dynamical phenomena different from those observed for models with static disorder. Finally, a very comprehensive discussion of the physics of different types of spin 1/2 chains (in particular XY and XXZ) with random exchange couplings (e.g.\ the parameters $\mu_j$ and $\eta_j$ in the case of the XY model) can be found in \cite{Fisher}.

\subsection{Decay of ground state correlations} \label{subsec:isocordecay}

Corollary~\ref{cor:0vLRisoxy} allows, at least in principle, to apply Theorem~\ref{thm:expdec} to the isotropic xy chain in random magnetic field and conclude exponential decay of ground state correlations. This is not completely straightforward as there are two additional issues which need to be addressed:

(i) Theorem~\ref{thm:expdec} is deterministic in nature, saying that a zero-velocity Lieb-Robinson bound for a fixed quantum spin Hamiltonian implies exponential decay of ground state correlations for this Hamiltonian. Corollary~\ref{cor:0vLRisoxy}, however, provides the zero-velocity LR bound in {\it expectation}. To get a probabilistic version of correlation decay from this we will have to slightly adapt the proof of Theorem~\ref{thm:expdec}.

(ii) We need information on the gap size $\gamma$ for the random xy chain to control the logarithmic correction in (\ref{eq:expdec}). In doing so we will have to take into account that $\gamma$ itself is a random variable. To achieve the second goal, we start by providing some more background on the explicit diagonalization of the free Fermion system (\ref{eq:anisotropicFermi}).

As we are interested in the isotropic xy chain and thus $M$ as in (\ref{eq:isoM}), we can simply choose
\[ W = \left( \begin{array}{cc} U & 0 \\ 0 & U \end{array} \right) \]
in (\ref{eq:diagtildeM}) where $U$ is orthogonal and diagonalizes the Anderson model,
\[ U A U^t = \Lambda = \mbox{diag}(\lambda_1, \ldots, \lambda_n).\]
Note that this choice is slightly different from what was done in Section~\ref{subsec:diag}, as $\lambda_1 \le \ldots \le \lambda_n$ are now the eigenvalues of the Anderson model, which are not necessarily non-negative. Proceeding very similar to above, in fact a bit simpler, we define
\[ \left( \begin{array}{c} b_1 \\ \vdots \\ b_n \end{array} \right) := U \left( \begin{array}{c} c_1 \\ \vdots \\ c_n \end{array} \right),\]
see that $b_j$, $j=1,\ldots,n$, satisfy the CAR and allow to represent the isotropic xy chain as the free Fermion system
\[ H_{n,{\rm iso}} = 2 \sum_{j=1}^n \lambda_j b_j^* b_j - E^{(n)} \idty,\]
where, as before, $E^{(n)} = \sum_{j=1}^n \lambda_j$. The $b_j^*$ and $b_j$ are interpreted as creation and annihilation operators. The following facts are well known, see e.g.\ \cite{Simon}.

\begin{itemize}

\item The operators $b_j^* b_j$, $j=1,\ldots,n$, are pairwise commuting orthogonal projections, meaning that they can be simultaneously diagonalized.

\item The intersection of the kernels of $b_j^* b_j$ is one-dimensional. Thus it is spanned by an essentially unique normalized vector $\Omega$ (the {\it vacuum vector}), so that  $b_j^*b_j \Omega = 0$, $j=1,\ldots,n$.

\item For each $\alpha = (\alpha_1, \ldots, \alpha_n) \in \{0,1\}^n$ use successive creation operators to define
\[
\psi_{\alpha} = (b^*)^{\alpha} \Omega := (b_1^*)^{\alpha_1} \ldots (b_n^*)^{\alpha_n} \Omega,
\]
which are an orthonormal basis of ${\mathcal H}_{[1,n]}$ consisting of common eigenvectors for the $b_j^* b_j$:
\begin{equation} \label{eq:evecprop}
b_j^* b_j \psi_{\alpha} = \left\{ \begin{array}{ll} 0, & \mbox{if $\alpha_j =0$}, \\ \psi_{\alpha}, & \mbox{if $\alpha_j=1$}. \end{array} \right.
\end{equation}

\end{itemize}

Clearly, the $\psi_{\alpha}$ are also eigenvectors for $H_{n,{\rm iso}}$,
\[ H_{n,{\rm iso}} \psi_{\alpha} = \left( 2 \sum_{j:\alpha_j=1} \lambda_j -E^{(n)} \right) \psi_{\alpha} = \left( \sum_{j:\alpha_j=1} \lambda_j - \sum_{j:\alpha_j=0} \lambda_j \right) \psi_{\alpha}.\]

This shows that the ground state energy of $H_n$ is $E_0 = - \sum_{j=1}^n |\lambda_j|$ and that $E_0$ is non-degenerate if and only if $\lambda_j \not= 0$ for all $j$. In this case the ground state is $\psi_{\alpha^{(0)}}$, where
\begin{equation} \label{eq:groundstate}
\alpha_j^{(0)} := \left\{ \begin{array}{ll} 1, & \mbox{if $\lambda_j<0$}, \\ 0, & \mbox{if $\lambda_j>0$}, \end{array} \right.
\end{equation}
and the gap $\gamma$ between the ground state energy $E_0$ and the first excited energy $E_1$ is
\begin{equation}  \label{eq:gapsize}
\gamma = E_1-E_0 = 2 \min_j |\lambda_j| = 2 \,\mbox{dist}(0,\sigma(A)).
\end{equation}
Thus probabilistic lower bounds for the gap size derive from the following Wegner estimate for the Anderson model, which holds under our assumptions, see e.g.\ Theorem~5.23 in \cite{Kirsch} for a detailed proof. It guarantees the existence of a constant $C_W<\infty$ such that
\begin{equation} \label{eq:Wegner}
\PP(\mbox{dist}(E,\sigma(A))<\varepsilon) \le C_W |\varepsilon| n
\end{equation}
uniformly in $n\in \N$, $\varepsilon >0$ and $E\in \R$. For $E=0$ this implies  by (\ref{eq:gapsize}) that
\begin{equation} \label{eq:goodguys}
\PP(\gamma \ge 2\varepsilon) \ge 1- C_W |\varepsilon| n.
\end{equation}
From (\ref{eq:Wegner}) we also see that $\PP(0\in \sigma(A)) =0$ and thus that the ground state of $H_{n,{\rm iso}}$ is almost surely non-degenerate. By $\psi_0$ we denote this almost sure essentially unique and normalized ground state (the choice of phase is irrelevant for the following result).

\begin{theorem} \label{thm:isoxycordecay}
Under assumption {\bf (A)} there exist $C<\infty$ and $\eta'>0$ such that
\begin{equation} \label{eq:isoxycordecay}
\E \left( |\langle \psi_0, AB \psi_0 \rangle - \langle \psi_0, A\psi_0 \rangle \langle \psi_0, B\psi_0 \rangle | \right) \le C\|A\| \|B\| \,n\, e^{-\eta' |j-k|}
\end{equation}
for all $A\in {\mathcal A}_j$, $B\in {\mathcal A}_{[k,n]}$ and all $1\le j<k\le n$.
\end{theorem}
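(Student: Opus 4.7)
The plan is to revisit the proof of Theorem~\ref{thm:expdec}, apply it pointwise on the almost sure event that the ground state is non-degenerate and the gap is strictly positive, take expectations at every step, and control the random gap via the Wegner bound~(\ref{eq:goodguys}). I work with the shifted Hamiltonian $H_{n,{\rm iso}} - E_0 \idty$ (same ground state and Heisenberg evolution, but now with ground state energy $0$) and replace $B$ by $\tilde{B} := B - \langle \psi_0, B \psi_0 \rangle \idty$; this yields $\| \tilde{B} \| \leq 2 \|B\|$, $P_0 \tilde{B} \psi_0 = P_0 \tilde{B}^* \psi_0 = 0$ almost surely, and the quantity on the left of~(\ref{eq:isoxycordecay}) equals $| \langle \psi_0, A \tilde{B} \psi_0 \rangle |$. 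Since $\tau_t^n(\tilde{B})$ and $\tau_t^n(B)$ differ only by a random multiple of the identity, any LR-type bound for $B$ transfers verbatim to $\tilde{B}$.

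I then follow the proof of Theorem~\ref{thm:expdec} step by step: the decomposition~(\ref{eq:corr}) with $\tilde{B}$ in place of $B$, together with the spectral estimates and the $\epsilon \to 0$ limit, gives the almost sure pointwise bound
\[
\left| \langle \psi_0, A \tilde{B} \psi_0 \rangle \right| \leq e^{-\gamma^2/(4 \alpha)} \|A\| \|\tilde{B}\| + \frac{1}{2\pi} \int_{\mathbb{R}} \frac{\| [A, \tau_t^n(B)] \|}{|t|} \, e^{-\alpha t^2} \, dt
\]
for every $\alpha > 0$. Taking expectations, the gap-dependent term is handled by splitting on $\{ \gamma \geq 2 \varepsilon \}$ and applying~(\ref{eq:goodguys}), producing $\E (e^{-\gamma^2/(4\alpha)}) \leq e^{-\varepsilon^2/\alpha} + C_W n \varepsilon$. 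For the integral term I invoke Fubini and split the $t$-axis into $|t| \leq 1$, $1 < |t| \leq \lambda$, and $|t| > \lambda$ exactly as in the proof of Theorem~\ref{thm:expdec}, estimating the short-time piece by Corollary~\ref{cor:smalltimeLR} (applicable because assumption~{\bf (A)} yields bounded couplings) and the long-time pieces by Corollary~\ref{cor:0vLRisoxy}. With the same choices $\alpha = \varepsilon^2/(\eta |k-j|)$ and $2 \lambda = \sqrt{\pi/\alpha}$, this leads to
\[
\E \left( \left| \langle \psi_0, A \tilde{B} \psi_0 \rangle \right| \right) \leq C \|A\| \|B\| \Bigl( e^{-\eta|k-j|} \bigl( 1 + |k-j| \bigr) + n \varepsilon \Bigr) \, ,
\]
where the linear factor $|k-j|$ absorbs the logarithmic correction $\tfrac{1}{2} \ln (\pi \eta |k-j|/(4 \varepsilon^2))$ once $\varepsilon$ is chosen exponentially small in $|k-j|$.

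Finally, I optimize by setting $\varepsilon = e^{-\delta |k-j|}$ for any fixed $\delta \in (0, \eta)$; both summands are then dominated by $n \, e^{-\eta' |k-j|}$ for any $\eta' < \delta$, yielding~(\ref{eq:isoxycordecay}). The main obstacle is the probabilistic weakness of the Wegner estimate~(\ref{eq:goodguys}): the small-gap event $\{ \gamma < 2 \varepsilon \}$ has probability only $O(n \varepsilon)$ rather than $O(\varepsilon)$, and this volume factor is precisely what produces the (apparently unavoidable, within this approach) prefactor $n$ on the right of~(\ref{eq:isoxycordecay}). A secondary subtlety, that $\tilde{B}$ depends on the random ground state, is bypassed by the identity $[A, \tau_t^n(\tilde{B})] = [A, \tau_t^n(B)]$, which reduces all LR-type estimates to those already established in Section~\ref{sec:xychain} for the deterministic observable $B$.
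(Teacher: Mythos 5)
Your proposal is correct and follows essentially the same route as the paper's proof: apply the machinery of Theorem~\ref{thm:expdec} after centering the observable, invoke Corollary~\ref{cor:0vLRisoxy} together with Corollary~\ref{cor:smalltimeLR} (the combined $\min[|t|,1]$ bound is exactly the paper's~(\ref{eq:largesmall})), and control the random gap via the Wegner estimate~(\ref{eq:goodguys}), which is precisely the source of the prefactor $n$. The only cosmetic difference is your choice of gap threshold $\varepsilon = e^{-\delta|k-j|}$, distance-dependent, where the paper uses the size-dependent threshold $\gamma \ge e^{-\eta n}$ and then exploits $n \ge |k-j|$; both choices yield the same final bound with any $\eta' < \eta$.
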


\begin{proof}
We closely mimic the proof of Theorem~\ref{thm:expdec}, with the main difference being that the zero velocity Lieb-Robinson bound holds in expectation (i.e.\ in the form (\ref{eq:transbound})), and not deterministically as assumed in Theorem~\ref{thm:expdec}.

As the ground state of $H_{n,{\rm iso}}$ is almost surely non-degenerate we may assume by the remarks after Theorem~\ref{thm:expdec} that $\langle \psi_0, B\psi_0 \rangle =0$ and, after an energy shift, that $E_0=0$. We thus need to bound $\E(|\langle \psi_0, AB \psi_0 \rangle |)$.

Using Corollary~\ref{cor:0vLRisoxy} combined with an application of Corollary~\ref{cor:smalltimeLR} to $H_{n,{\rm iso}}$ yields the existence of $C>0$ and $\eta>0$ such that
\begin{equation} \label{eq:largesmall}
\E \left( \| [ \tau_t^n(A), B] \| \right) \le C \min[|t|,1] \|A\| \|B\| e^{-\eta |k-j|}
\end{equation}
for all $A\in {\mathcal A}_j$, $B\in {\mathcal A}_{[k,n]}$ and $1\le j< k\le n$.

Decompose $\Omega$ into good and bad events according to
\[
\Omega_g := \{ \nu = (\nu_j) \in \Omega: \gamma \ge e^{-\eta n}\}, \quad \Omega_b := \Omega \setminus \Omega_g.
\]
By (\ref{eq:goodguys}) we have
\begin{equation} \label{eq:badbound}
\PP(\Omega_b) \le \frac{1}{2}C_W \,n \, e^{-\eta n}.
\end{equation}
Accordingly we decompose
\begin{equation} \label{eq:goodbadsplit}
\E \left( | \langle \psi_0, AB \psi_0 \rangle | \right) = \E \left( \chi_{\Omega_g} \cdot |\langle \psi_0, AB \psi_0 \rangle | \right) + \E \left( \chi_{\Omega_b} \cdot |\langle \psi_0, AB \psi_0 \rangle | \right),
\end{equation}
where $\chi_{\Omega_g}$ and $\chi_{\Omega_b}$ denote the characteristic functions of $\Omega_g$ and $\Omega_b$.

With (\ref{eq:badbound}) we can bound the second term by
\begin{equation} \label{eq:sectermbound}
\E \left( \chi_{\Omega_b} \cdot |\langle \psi_0, AB \psi_0 \rangle | \right) \le \|A\| \|B\| \PP(\Omega_b) \le \frac{1}{2} C_W \|A\| \|B\| \,n \,e^{-\eta n} \le \frac{1}{2} C_W \|A\| \|B\| \,n \, e^{-\eta|k-j|}.
\end{equation}

The first term on the right hand side of (\ref{eq:goodbadsplit}) is dealt with in complete analogy with the proof of Theorem~\ref{thm:expdec}, starting with the three-way split in (\ref{eq:corr}). Due to the restriction to $\Omega_g$ we have $\gamma \ge e^{-\eta n}$. Thus the proper choice of $\alpha$ corresponding to (\ref{eq:alphalambda}) is $\alpha := e^{-2\eta n}/ (4 \eta |k-j|)$. We again set $\lambda := \sqrt{\pi/4\alpha}$.

Proceeding as in the proof of Theorem~\ref{thm:expdec} we get
\begin{equation} \label{eq:bd1}
\E \left( \chi_{\Omega_g} | \langle \psi_0, A(B-B(\alpha,\varepsilon)) \psi_0 \rangle | \right) \le \frac{1}{2}\|A\| \|B\| e^{-\eta|k-j|}
\end{equation}
and
\begin{equation} \label{eq:bd2}
\E \left( \chi_{\Omega_g} | \langle \psi_0, B(\alpha,\varepsilon) A \psi_0 \rangle | \right) \le \frac{1}{2}\|A\| \|B\| e^{-\eta|k-j|}.
\end{equation}
On the remaining term we use Fubini to bound, according to (\ref{eq:thirdterm}),
\begin{eqnarray*}
\E \left( \chi_{\Omega_g} | \langle \psi_0, [ A, B(\alpha,\varepsilon)] \psi_0 \rangle | \right) & \le & \E( \|[A, B(\alpha,\varepsilon)]\| \\
& \le & \frac{1}{2\pi} \int_{\R} \E \left( \frac{1}{|t|} \|[A, \tau_t^n(B)]\| \right) e^{-\alpha t^2}\,dt.
\end{eqnarray*}
As above, the $t$-integration is split into the the three regions $|t|\le 1$, $1\le |t| \le \lambda$ and $|t|\ge \lambda$. Proceeding as above, using (\ref{eq:largesmall}), one gets three bounds which combine to
\begin{equation} \label{eq:bd3}
\E \left( \chi_{\Omega_g} | \langle \psi_0, [ A, B(\alpha,\varepsilon)] \psi_0 \rangle | \right)  \le  C\|A\| \|B\| (1+ \ln \lambda) e^{-\eta|k-j|}.
\end{equation}
We finally observe that $\ln \lambda = \eta n + \frac{1}{2} \ln(\pi \eta |k-j|)$. The bounds (\ref{eq:bd1}), (\ref{eq:bd2}) and (\ref{eq:bd3}) combine into
\[ \E\left(\chi_{\Omega_g} \cdot | \langle \psi_0, AB \psi_0 \rangle | \right) \le C\|A\| \|B\| (1+ n+ \ln|k-j|) e^{-\eta|k-j|}.\]
Together with (\ref{eq:goodbadsplit}) and (\ref{eq:sectermbound}) this gives (\ref{eq:isoxycordecay}), where $\eta'$ can be any number less than $\eta$.

\end{proof}

\noindent {\bf Remark:} We note that Theorem~\ref{thm:isoxycordecay} is trivial for the case of strong magnetic field in the sense that either supp$\,\PP_0 \subset [2|\mu|,\infty)$ or supp$\,\PP_0 \subset (-\infty, -2|\mu|]$. In this case the ground state $\psi_0$ of $H_{n,{\rm iso}}$ has all spins down (or all spins up, respectively). Thus it is a product state with trivial correlations.

To see this, say for supp$\,\PP_0 \subset [2|\mu|,\infty)$, note that in this case all eigenvalues of the Anderson model $A$ are positive, $0<\lambda_1< \ldots < \lambda_n$. Thus, by (\ref{eq:groundstate}), the non-degenerate ground state of $H_{n,{\rm iso}}$ coincides with the vacuum $\Omega$ and
\[ E_0 = -\sum_{j=1}^n \lambda_j = - \mbox{tr}\,A = -\sum_{j=1}^n \nu_j.\]
But one also verifies directly that the all spins down product state is an eigenvector of $H_{n,{\rm iso}}$ to eigenvalue $-\sum_j \nu_j$.

Theorem~\ref{thm:isoxycordecay} is non-trivial for the case where $0$ lies in the bulk of the spectrum of the Anderson model and thus the ground state of $H_{n,{\rm iso}}$ becomes entangled.

\section{Concluding Remarks} \label{sec:conrem}

\subsection{The work by Klein and Perez \cite{KleinPerez}}

A result on exponential decay of ground state correlations for the isotropic $xy$-chain (\ref{eq:isoxychain}) closely related to Theorem~\ref{thm:isoxycordecay} was previously proven by Klein and Perez in \cite{KleinPerez}. They consider the special case of correlations between the raising and lowering operators $a_j^*$ and $a_k$ and show that there exists an $\eta>0$ such that
\begin{equation} \label{KPbound}
\sup_n | \langle \psi_0, a_j^* a_k \psi_0 \rangle| \le C_{\nu} e^{-\eta |j-k|}
\end{equation}
for all $j$ and $k$ and almost every choice of the magnetic field $\nu = (\nu_j)$. Their proof is also based on the Jordan-Wigner transform and then uses Wick's Theorem to expand the ground state correlations $\langle \psi_0, a_j^*, a_k \psi_0\rangle$ in terms of the Fermi two-point function $\langle \psi_0, c_j^* c_k \psi_0 \rangle$. The latter is given by matrix elements of a spectral projection of the underlying Anderson model. Exponential decay is concluded using localization results for the one-dimensional Anderson model found via the multiscale analysis approach. Compared to our approach this has the advantage of allowing for general distributions of the i.i.d.\ random variables $(\nu_j)$, including the most singular case of Bernoulli variables.

However, the results of \cite{KleinPerez} and our results on the $xy$ chain are not directly compatible in that Klein and Perez get correlation decay with probability one, while we get our decay bound after averaging over the disorder. In principle, expectation bounds such as (\ref{eq:isoxycordecay}) imply similar almost sure bounds with a reduced exponent by a standard summation argument. However, this would lead to a slight volume dependence in the almost sure bounds, due to the factor $n$ on the right hand side of (\ref{eq:isoxycordecay}) but also the fact that we work in finite volume and thus can't use translation invariance of the random variables considered here. Thus the almost sure bounds obtained from (\ref{eq:isoxycordecay}), while holding for larger classes of observables $A$, $B$, would be slightly weaker than the ones found in \cite{KleinPerez}. On the other hand, almost sure bounds such as (\ref{KPbound}) do not imply bounds on averages, due to the $\nu$-dependence of the constant in (\ref{KPbound}).

The most important new aspect of our approach, as compared to the direct proof of correlation decay in \cite{KleinPerez}, is that we prove dynamical localization in the form of Corollary~\ref{cor:0vLRisoxy} and then deduce correlation decay as a model independent consequence of this. This latter fact can be seen as somewhat reminiscent of a well-known result from the theory of random Schr\"odinger operators, saying that suitable forms of dynamical localization imply spectral localization (e.g.\ pure point spectrum with exponentially decaying eigenfunctions).

\subsection{More general random block operators}

Our strongest results on localization properties for random spin systems, e.g.\ Corollary~\ref{cor:0vLRisoxy} and Theorem~\ref{thm:isoxycordecay}, are restricted to the isotropic xy chain (\ref{eq:isoxychain}), as in this case we can readily refer to the underlying dynamical localization bound (\ref{eq:dynlocAnderson}) for the Anderson model. Extending these results to the anisotropic xy chain (\ref{eq:anisoxychain}) would require to prove dynamical localization (\ref{eq:dynloc}) for more general random block operators $M$ of the type (\ref{eq:Mmatrix}), (\ref{eq:A}), (\ref{eq:B}), under suitable assumptions on the random parameters $\nu_j$ and $\mu_j$. Even for the case of constant $\mu_j$ and i.i.d.\ $\nu_j$ this does not seem to be known. A difficulty which prevents a straightforward extension of existing methods is the non-monotonicity of $M$ in the random paramaters (the quadratic form of $M$ is not monotone in the $\nu_j$, as each appears twice, once with positive and once with negative sign). Also interesting is the case of random interaction strength $\mu_j$ of neighboring spins (and, for example, vanishing or constant magnetic field), leading to an Anderson-type operator $M$ with off-diagonal randomness. While the latter type of randomness has been studied in the literature, we are not aware of a result like (\ref{eq:dynloc}) in this case.

A further generalization of the anisotropic xy chain (\ref{eq:anisoxychain}) is found by allowing interactions of more than just next neighbors. By the same argument as in Section~\ref{subsec:diag} this can be reduced to diagonalizing a block operator $M$ as in (\ref{eq:Mmatrix}) with more general band matrices $A$ and $B$.

We also note that random operators of the slightly different form
\[ \left( \begin{array}{cc} A & B \\ B & -A \end{array} \right) \]
where both, $A$ and $B$, are self-adjoint have been studied in \cite{KMM}. Here the authors were motivated by physical applications in the BCS-theory of superconductivity. For these models of random block operators as well, questions relating to localization remain widely open and will require additional work.

%
%

\section*{Acknowledgements}

The authors express their gratitude to Bruno Nachtergaele for valuable discussions and to Jacob Chapman for a thorough proofreading of parts of this work. G.\ S.\ acknowledges hospitality at Centre Interfacultaire Bernoulli, Lausanne, Switzerland.
E.\ H.\ and R.\ S.\ both acknowledge the hospitality of the Erwin Schr\"odinger Institute in Vienna, Austria. 
E.\ H.\ is grateful for the hospitality and support of the University of Alabama at Birmingham during her visit.

\bigskip

\end{document}